\newcommand{\ceil}[1]{\lceil #1 \rceil}
\newtheorem{lemma}{Lemma}
\newtheorem{theorem}{Theorem}
\newcommand{\BBox}{\rule{0.1in}{0.1in}}
\newenvironment{proof}{\noindent {\bf Proof:}}{\ \BBox \\*}
\begin{document}

\title{A study of integer sorting on multicores}

\author{Alexandros V. Gerbessiotis\thanks{CS Department, New Jersey Institute of Technology, 
Newark, NJ 07102, USA. Email: alexg@njit.edu}
}
\maketitle
\thispagestyle{empty}


\begin{abstract}
Integer sorting on multicores and GPUs can be realized by a variety 
of approaches that include variants of distribution-based methods 
such as radix-sort, comparison-oriented algorithms such as 
deterministic regular sampling and random sampling parallel sorting, 
and network-based algorithms such as Batcher's bitonic sorting 
algorithm. 

In this work we present an experimental study of integer sorting 
on multicore processors.  We have implemented serial and parallel 
radix-sort for various radixes, deterministic regular oversampling 
and random oversampling parallel sorting, and also some previously
little explored or unexplored variants of bitonic-sort and 
odd-even transposition sort.

The study uses multithreading and multiprocessing parallel 
programming libraries with the C language implementations working 
under Open MPI, MulticoreBSP, and BSPlib utilizing the same
source code.

A secondary objective is to attempt to model the performance of 
these algorithm implementations under the MBSP 
(Multi-memory BSP) model.
We first provide some general high-level observations on the 
performance of these implementations. 
If we can conclude anything is that accurate prediction
of performance by taking into consideration architecture 
dependent features such as the structure and characteristics of 
multiple memory hierarchies is difficult and more often than not 
untenable. To some degree this is affected by the overhead imposed
by the high-level library used in the programming effort.
We can still draw however some reliable conclusions 
and reason about the performance of these implementations
using the MBSP model, thus making MBSP useful and usable.
\end{abstract}


\section{Summary}
\label{summary}

Integer sorting on multicores and GPUs can be realized
by traditional distribution-specific algorithms such as radix-sort 
\cite{BLM91,Garber2008,Langr16,Maus11}, or variants of it 
that use fewer  rounds of the baseline count-sort implementation 
provided additional information about key values is available 
\cite{Cheng11,Zhong12}. 
Other approaches include algorithms that use specialized 
hardware or software features of a particular multicore 
architecture \cite{Bramas17, Cheng11,Inoue11,Langr16}.  
Comparison-based algorithms have also been used with some 
obvious tweaks: use of deterministic regular sampling sorting 
\cite{SH} that utilizes serial radix-sort for local sorting 
\cite{Dehne99, Dehne12, Dehne17} or use other methods for local
sorting \cite{Zaghloul17,BLM91,Cederman08,Cheng11,Inoue11}. 
Network-based algorithms such as Batcher's \cite{Batcher68} bitonic 
sorting \cite{Ionescu97,BLM91,Peters11,Rathi16,Cederman08}
have also been utilized. In particualar, bitonic sorting is a 
low programming overhead algorithm and thus more suitable for
GPU and few-core architectures, is simple to implement, and
quite fast when few keys are to be sorted, even if its 
theoretical performance is suboptimal.

In this work we perform an experimental study of integer sorting 
on multicore processors using multithreading and multiprocessing 
based libraries that facilitate parallel programming.
Our implementations need only recompilation of the same C language
source to work under Open MPI \cite{openMPI},
MulticoreBSP \cite{Yzelman14}, and a multi-processing and out 
of maintenace library, BSPlib \cite{Hill96}. 

Towards this we have implemented  
(a) serial and parallel radix-sort for various radixes, named 
SR4 for the serial   radix-256 version,
PR4 for the parallel radix-256,  and 
PR2 for the parallel radix-65536 versions,
(b) previously little explored or unexplored  variants of
bitonic-sort named BTN, and odd-even transposition sort named OET
respectively,
(c) a variant of the  random oversampling parallel sorting 
algorithm of \cite{GV94} named GVR,
(d) the deterministic regular
oversampling parallel sorting algorithm of \cite{GS96,GS99a}
named GSD,
and
(e) a random oversampling parallel sorting algorithm named GER
that differs from other random sample sorting approaches in 
that it follows instead the skeleton of deterministic regular
sampling or oversampling methods \cite{SH,GS96,GS99a}.

We then present observations on a performance study of such 
algorithm implementations on multiple platforms and architectures 
and multiple programming libraries. If we can conclude anything 
is that precisely modeling their performance by taking into 
consideration architecture dependent features such as the 
structure and characteristics of multiple memory hierarchies is
difficult and more often than not unusable or unreliable. 
This is primarily because of the uknown or difficult to model
characteristics of the underlying software library that facilitates 
parallel programming.
However  we can still draw some very simple conclusions  using 
traditional architecture independent parallel modeling under 
Valiant's BSP model \cite{LGV90} or the augmented 
MBSP model \cite{G15} that has recently been proposed by this 
author as a flat multi-memory extension of the BSP model for 
multicore or multimemory architectures. 
We have stayed away for example from
Valiant's Multi-BSP model \cite{LGV11} for a variety of
obvious reasons. Modern cache and memory hierarchies change from
one processor generation to another that makes precise modeling
of them a hopeless task. Writing portable code that is optimal
under such conditions is next to impossible.
The Multi-BSP hierarchical modeling
is suitable for hierarchical algorithms. Implementing portable
and efficient hierarchical algorithms is beyond the capabilities 
of this author, a hard task left to others, and frankly  difficult
if not impossible to achieve. Moreover, none of the
algorithm implementations utilized in this study exhibits features 
of a hierarchical algorithm. 
In fact the GVR and GSD variants of \cite{GV94} and
\cite{GS96,GS99a} respectively, purposefully eliminate hierarchical 
features of the original algorithms for the benefit of generating 
a portable and efficient and potentially performance predictable
implementation. The hierarchical features inserted into 
\cite{GV94,GS96,GS99a} served only one purpose: to show
theoretical optimality for extreme conditions when $p$, 
the number of processors is very close to $n$ the number of 
keys to be sorted.
It is for these reasons that we believe that the simplicity of 
MBSP is more suitable for modeling than Multi-BSP.

Some of the experimental conclusions drawn from this study
are discussed briefly below.
For small problem sizes (say $n$, the number of integer keys 
is lower than about 150,000 or so), BTN, the variant 
of bitonic sorting, indeed outperforms serial or parallel
radix-sort with their more time-consuming setup, and
complex communication/memory patterns
for the core and thread count of the testbed
platforms (up to 32 cores or hardware-supported threads). 
This has been observed indepependently by others before \cite{BLM91}.
What has been quite even more extraordinary is 
that the variant OET of odd-even transposition sort implemented
also exhibited good, yet slightly worse performance compared to
BTN on some architectures.
This did not use to be the case in the past when one had to
deal with the  $p$ processors of a cluster, an SMP machine, or 
a supercomputer. It rather seems to be a featur of multicore
architectures.

For small problem sizes, at around 16 cores or threads, 
GSD started consistently beating all variants of radix-sort
and it was either better or had comparable performance to BTN.
GER had a slightly worse performance than GSD yet it was
also consistently better than all variants of radix-sort as well.
GVR exhibited marginally worse performance than GER or GSD.
This however might have to do with the different sampling 
methods and sample sizes used in GVR and GER and also in GSD. 
We shall study such behavior in another work.

For large problem sizes (say $n$ is at least 8,000,000 or more),
BTN and OET were in general uncompetitive. 
PR4 had an edge over the 
sampling based algorithms for large and increasing problem sizes,
32,000,000  to 128,000,000 but marginally only in most cases.
PR2 had performance worse than BTN and OET in several cases.
This is primarily a result of the high radix of PR2; 
more optimizations are needed for PR2 to become competitive.

Moreover, we have observed that assigning multiple threads per 
core is not recommended for CPUs with large number of cores. 
This is because of  the additional demand to access a
scarce resource, main memory (RAM), in data intensive 
applications that might also not be locality friendly.
For CPUs with moderate number of cores, 
assigning multiple threads per core
should never exceed the hardware supported bound 
(usually two), or be even lower than that.
If the number of cores is  small, multiple threads
per core can be used as long as problem size is kept small,
i.e. the application becomes less data intensive.

For parallel radix-sort of 32-bit integers,
radix-$2^8$ radix-sorting i.e. four rounds of baseline 
count-sort is faster than the alternative radix-$2^{16}$ 
sorting that uses two rounds. 
However, depending on the architecture and its structure 
of its caches (level-1, level-2 and level-3) it is 
possible that radix-$2^{16}$ to outperform radix-$2^8$.
This has not been observed in our experiments though.

Overall efficiency is dependent on the number of cores if
the degree of parallelism is large. For degree 
of parallelism less than four or eight, efficiency 
can be expressed either in terms of number of cores or 
threads. As most architectures only support in hardware
two threads per core, inefficiencies and significant
drop in performance arise if one goes beyond that.
For the multiprocessing library, BSPlib, exceeding the number
of hardware cores leads to immediate drop in performance.
In this latter case number of cores rather than number
of threads determines or best describes speedup or 
efficiency.

\section{Related work}
\label{relatedwork}

The problem of sorting keys in parallel has been studied
extensively. One major requirement in parallel algorithm
design is the minimization of interprocessor communication
that reduces non-computational overhead and thus speeds up
parallel running time. In order to shorten communication
several techniques are employed that allow coarse
(eg. several consecutive keys  at a time) rather than fine-grained
(eg. few individual keys) communication.
Coarse-grained communication usually
takes advantage of locality of reference as well.
Some of these optimizations work transparently in multicore
or manycore programming as well.
In parallel sorting, if one wants to sort $n$ keys
in parallel using $p$ processors, one obvious way to achieve
this is to somehow split the $n$ keys into $p$ sequences of
approximately the same size   and then sort these $p$ sequences
independently and in parallel using either the same algorithm
recursively in a hierarchical fashion but in a serial setting,
 or resort to another fast serial algorithm, so that the 
concatenation (rather than merging) of the $p$
sorted sequences generates the sorted output.
For $p=2$ this is  quicksort \cite{Hoare62}.

The experimental work of \cite{BLM91} offers a collection 
of parallel algorithms that have been used unmodified 
or not as a basis for integer multicore or parallel sorting. 
One such algorithm has been radix-sort, another one has been
sample-sort \cite{Reischuk85,HC,Reif87} i.e. random
oversampling sorting along the lines of \cite{Reif87}.
Key sampling employed in parallel sorting has been studied
in \cite{HC,Reif87,Reischuk85} that provide algorithms with
satisfactory and scalable theoretical performance.
Using the technique of {\em random oversampling}, 
fully developed and refined in the context of parallel 
sorting \cite{HC,Reischuk85,Reif87},
in order to sort $n$ keys with $p$ processors,  one uses a 
sample of $ps-1$ uniformly at random selected keys where $s$ 
is the {\em random oversampling factor}. After sorting the 
sample of size $ps-1$, one then identifies
$p-1$ splitters as equidistant keys in the sorted sample.  
Those $p-1$ keys split the input into $p$ sequences
of approximately the same size that can then be sorted
independently. 
In  \cite{Reif87} it is  shown that the $p$ sequences of the
$n$ input keys induced by the $p-1$ splitters 
will retain with high probability $O(n/p)$ keys each
and will thus be balanced in size up to multiplicative constant
factors if one regulates $s$ properly.
In \cite{GV94}, it is  shown that by fine-tuning and
carefully regulating this oversampling factor $s$,
the $p$ sequences have with high probability
$(1+ \epsilon ) n/p$ keys resulting in a finer balance.
Parameter  $0< \epsilon <1$ is a parameter that is controlled
by $s$.
The bounds on processor imbalance during sorting for 
\cite{GV94} are tighter than those of any other random sampling 
or oversampling algorithm \cite{HC,Frazer70,Reif87,Reischuk85}.

Thus if one was to extend say a traditional quick-sort method 
to use oversampling, one would need to add more steps or 
phases to the sorting operation.
These would include a sample selection phase, 
a sample-sort phase, a splitter selection phase, 
followed by a phase that splits the input keys around the 
chosen splitters.
A subsequent routing operation of the split keys is followed
by local sorting if possible or using the same algorithm
recursively.

{\em Algorithm GVR.} The random oversampling algorithm in 
\cite{GV94} follows this execution pattern.
\cite{GV94} generically indicates how one can sort the 
sample using algorithms from \cite{NS}. If $p^2 < n$, and 
it is so for the case of our  multicore sorting study, one 
could perform sample sorting either serially or in parallel 
thus bypassing \cite{NS} methods.
In our GVR adaptation of \cite{GV94} we thus use bitonic-sort
i.e. BTN for sample sorting.  If one performs a binary search
of the input keys to the $p-1$ splitters selected  in random
sampling the resulting communication would be rather fine-grained.
There is a need to pack together all keys destined to the
same processor. Thus in our implementation, following the
binary search step we ''count-sort'' the keys based
on the destination processor each one is destined to.
This alas requires $n/p$ extra space for the processor identifiers
and $n/p$ more space for the output of ''count-sorting''.
At that point keys are ordered by destination into 
$p$ sequences and then sequence $i$ is dispatched to 
processor $i$. If this is not realized,
fine-grain communication will degrade the performance of GVR.
Furthermore, in the context of
integer sorting the split keys are locally sorted using 
radix-sort i.e. SR4; there is no need for the recursion
of \cite{GV94}.

{\em Deterministic regular sampling} as presented in \cite{SH} 
works as follows.  First split regularly and evenly $n$ input keys 
into $p$ sequences of equal size  (the first $n/p$ or so keys  become
part of the first sequence, the next $n/p$ keys
part of the second and so on).  Locally sort the $p$  sequences 
independently (eg by using radix-sort if the keys are integer), 
and then pick from each sequence $p-1$ regular and equidistant 
sample keys for a sample of total size $p(p-1)$.
A sorting (or multi-way merging) of these sample keys can be
followed by the selection of $p-1$ equidistant splitters
from the sorted sample.
Those splitters are broadcast to all processors or cores,
that then split their input keys around those $p-1$ splitters into
$p$ buckets. The keys of bucket $i$ are routed to processor or core $i$
in a follow-up communication step. The $p$ received sequences by 
processor $i$, one per originating processor, are then merged using 
$p$-way merging and the algorithm terminates.
One can then prove that if the $n$ keys are
split  around  the $p-1$ splitters,
then none of the processors including processor $i$ will receive more
than $2n/p$ keys \cite{SH}.  For this method to be optimal
one needs to maintain $n/p > p^2$. The work of \cite{Dehne99} also
discusses the case of $n/p < p^2$.

{\em Deterministic regular oversampling.}
In \cite{GS96,GS99a,GS97a} the notion of regular sampling is
extended to include oversampling thus giving rise to
deterministic regular oversampling.
In that context, $(p-1)s$ regular and equidistant sample keys are
chosen from each one of the $p$ sequences thus generating a larger 
sample of  $p(p-1)s$ keys deterministically,
with $s$ being now the {\em regular oversampling factor}. One can
claim \cite{GS99a}  that each one of the $p$ sequences split 
by the $p-1$ resulting splitters is of size   no more than
$(1+\delta ) n/p$, where $\delta >0$ also depends on the choice of 
$s$ thus reducing a $2n/p$ imbalance into an $(1+\delta) n/p$ one
with $\delta$ depending on $s$.

{\em Algorithm GSD.} In this study the algorithm of \cite{GS99a}
is implemented quite faithfully including the  parallel
sample sorting step that involves the BTN implementation.
The initial local sorting uses SR4 \cite{CLRS}. More details follow in
the next section.

{\em Algorithm GER.}
Random oversampling-based sorting is supposed to be superior to
deterministic regular sampling or oversampling-based sorting as
the regular oversampling factor $s$ can not be finely-tuned as much
as the  random oversampling factor $s$ \cite{GS96,GS99a}.
Moreover the deterministic algorithms suffer from the final multi-way
merging step.
To test  this, in this study we also implement a second random
oversampling parallel sorting algorithm, GER, that works
however quite differently from GVR and other random
sampling algorithms and its skeleton draws from
deterministic regular sampling \cite{SH} or 
regular oversampling \cite{GS96,GS99a} i.e. it is very similar
to GSD.
In such an approach local key sorting comes first before
sample and splitter selection,
followed at the very end by a multi-way merging that is
less locality-sensitive. Thus the only difference between GER
and GSD is the sample selection step: the former's one resembles
that of GVR's. For both cases however BTN 
is used for sample sorting. More details follow in the next section.

The algorithm of \cite{SH} was used by \cite{Dehne99} in the
context of integer sorting; initial local sorting involved
radix-sort.  Subsequently \cite{Dehne12,Dehne17} utilized this
approach for GPU sorting of arbitrary (not necessarily integer) keys.
Similarly to the approach of \cite{GS96,GS99a} but differently,
a sample of size $ps$ is being used rather than
the $p(p-1)s$ of \cite{GS96,GS99a}. The GPU architecture's
block thread size determines $p$ and other GPU constraints
dictate $s$. Thus effectively the implied oversampling
factor of \cite{GS96,GS99a} becomes
$s/(p-1)$ in \cite{Dehne12,Dehne17}. \cite{Peters11} also performs
GPU sorting by using a bitonic sorting method. The conclusions
of the two papers  \cite{Peters11,Dehne12}
agree that bitonic-sorting works better for small
values of $n$ and sample sort \cite{Dehne12,Dehne17}
or radix-sort \cite{Peters11} is better for larger $n$.
Their overall conclusions are in line with
those of \cite{BLM91}.

In \cite{Zaghloul17} a parallel merge-sort is analyzed
and implemented on multicores. The parallelization
of merge-sort is not optimal. An $(n/p)\lg{(n/p)}$
local and independent sorting on $p$ threads/cores
is followed by a merging yielding
a speedup bounded by $n\lg{(n)} / (n \lg{(n)}/p +2n)$.
This work is similar to that of \cite{Zhong12}.
The latter deals with multisets (keys taking a range
of distinct values).  Even on four processing
cores efficiency is less than $50\%$.
\cite{Bramas17} utilizes a quicksort approach and
AVX-512 instructions on Intel's Knights Landing.
For small problem sizes the use of insertion sort
is replaced by bitonic sort to allow for
vectorization.
This is close to the work of \cite{Cederman08} that introduces
GPU-Quicksort. The thread
processors perform a single task: determining whether
a key is smaller or not than a splitter. Then
a rearranging of the keys is issued following something akin
to a scan/parallel prefix \cite{Leighton91} operation. Its
performance is compared to  a radix-sort implementation and
shown to be slightly better but mostly worse than
a Hybridsort approach that uses bucketsort and merge-sort
(whose performance depends on the distribution of the
input keys).

A third algorithm used in the pioneering work of \cite{BLM91} 
is bitonic sorting.
For sorting $n$ keys (integer or otherwise) the bitonic
sort of \cite{BLM91} employs a $\Theta(\lg^2{(n)})$ stage
bitonic sort. If the input is regularly partitioned,
$n/p$ keys are located inside a single
processor. Thus bitonic merging of those keys can be done
entirely within the corresponding processor's memory as 
observed in \cite{BLM91}; the authors alternatively proposed
using linear-time serial merging over the slower 
bitonic merging.
It is worth noting that the bitonic sort of \cite{BLM91}
for small $n$ outperformed the other sorting
methods as implemented on a Connection Machine CM-2.
This is due to the low programming overhead of the method as
also highlighted in the previous section.
This approach of \cite{BLM91} to bitonic sorting has 
been followed since then. More recently  
\cite{Rathi16} is using an implementation drawn 
from \cite{Ionescu97} for 
bitonic sorting on GPUs and CPUs.
Such an implementation resembles the one above where 
for $n/p$ keys local in-core operations are involved.
Alas, overall GPU performance is rather unimpressive: 
a 10x speedup over the CPU implementation
on an NVIDIA GT520, or 17x on a Tesla K40C for 5M
keys, and a modest 2-3x speedup for fewer keys.

However neither \cite{BLM91} nor the other 
implementations of bitonic sort cited
\cite{Rathi16,Ionescu97} or to be cited
later in this section considered that bitonic sorting
involving  $n$ keys on $p$ 
processors/ cores/ threads
can utilize a bitonic network of 
$\Theta(\lg^2{(p)})$ stages.
If $p$ is substantially smaller than $n$, then
the savings are obvious compared to a  
$\Theta(\lg^2{(n)})$ stage bitonic sorter utilized in 
\cite{BLM91,Rathi16,Ionescu97} and other works.

{\em Algorithm BTN.}
Indeed this author \cite{GS96,GS99a} highlighted the
possibility and used  $\Theta(\lg^2{(p)})$ stage bitonic 
sorting for sample sorting involving more than $p$
keys in the context of bulk-synchronous parallel 
\cite{LGV90} sorting.
In particular, \cite{GS96,GS99a} cite the work of 
\cite{BS78,Knuth73} for first observing that
a $p$-processor bitonic sorter can sort $n$ keys in
$\lg{(p)} (\lg{(p)}+1)/2$ stages (or rounds). 
Think of sorting with this $p$-processor bitonic
sorter $p$ 'keys' where each 'key' is a sorted sequence 
of $n/p$ regular keys.  
A comparison of two 'keys' on two processors gets resolved 
by merging the two $n/p$ sorted sequences located  to
these two processors and then assigning the lower
half, the smaller keys, to the lower indexed 
(eg.  left processor) and the upper half to the 
higher indexed processor (eg.  right processor).
At start-up before bitonic sorting commences 
each one of the $p$ 'keys' needs to be in the right
order i.e. the $n/p$ keys of each processor are 
locally sorted.  This local sorting can utilized
merge-sort or if the keys are integer radix-sort, or
something else.
In our realization of this $\lg{(p)} (\lg{(p)}+1)/2$
stage bitonic sorter, local initial sorting is
realized by SR4.
We have been calling this implementation BTN.

{\em Algorithm OET.}
Odd-even transposition sort \cite{Leighton91} is an
unrefined version of bubble-sort that has
been used for sorting in array structured parallel 
architectures
(one-dimensional arrays, two-dimensional meshes, etc).
In such a sort $n$ keys can be sorted by $n$ processors
in $n$ rounds using an oblivious sorting algorithm.
In an odd-indexed round a key at an odd-indexed position
is compared  to the even-indexed key to its immediate
right (index one more); a swap is performed if the former 
key is greater.  Like-wise for an even-indexed round. 
A very simple observation we make is that if the number 
of processors/cores is $p$, then a $p$-wide $p$-round  
odd-even transposition sort can sort $n$ keys by
dealing with $n/p$ key sequences, one such 
sequence per processor as it was done in the case of BTN
above. Each one of the $p$ cores is assigned one 'key', 
a sorted sequence of $n/p$ regular keys. Comparisons
between 'keys'  get resolved  the same way as in BTN
by merging the two sorted sequences of the two 'keys'
involving $n/p$ regular keys each.
Then the smaller $n/p$ keys get separated from the  
larger $n/p$ keys; the former are destined to the
lower-indexed, and the latter to the  
next-indexed processor respectively.
Likewise as before, at start-up the $p$ 'keys'
must be sorted before the $p$-wide odd-even transposition 
sort commences its execution on the $n$ key input. 
This is a local sort operation.
We call this $p$-wide implementation of odd-even
transposition sort used to sort $n$ keys OET for 
future references; local initial sorting utilizes SR4 \cite{CLRS}.

The implementation of AA-sort is undertaken in 
\cite{Inoue11}. AA-sort can be thought of as 
a bubble-sort like enhancement of the odd-even 
transposition sort we discussed earlier
and called OET. Whereas in OET we have
odd and even phases in AA-sort there is no such
distinction and either a bubble-sort step 
(from left to right) is executed if a gap parameter 
$g$ has a value of 1, or non-adjacent keys are 
bubble-sorted if $g$ is greater than 1. 
(Thus $a[i]$ and $a[i+g]$ are then compared.)
Likewise to OET initially each $n/p$ sequence is 
sorted; in the case of AA-sort, a merge-sort is
used. 
The use of a bubble-sort oriented approach is to
exploit vectorization instructions of the 
specific target platforms: 
PowerPC 970MP and Cell Broadband Engine(BE).
In the implementations \cite{Inoue11} present also 
results for a bitonic sort based implementation.
AA-sort seems to be slightly faster than bitonic sort 
with SIMD enhancements for 16K random integers. 
In the Cell BE, AA-sort outperforms bitonic sort 
for 32M integers (12.2 speedup for the former 
over 7.1 for the latter on 16 CELL BE cores).

The AQsort algorithm of \cite{Langr16} utilizes 
quicksort, a comparison-based sorting algorithm,
and  OpenMP is utilized to provide a 
parallel/multithread version of quicksort. 
Though its discussion in an otherwise integer
sorting oriented works as this one might not make
sense, there are some interesting remarks
made in \cite{Langr16} that are applicable to
this work.  It is observed
that hyperthreading provides no benefit and that
for Intel and AMD CPUs best performance is obtained
for assigning one thread per core. For Intel Phi
and IBM BG/Q two threads per core provide marginally
lower running times even if four threads are hardware
supported.

In \cite{Maus11} radix-sort is discussed in the
context of reducing the number of rounds of
count-sort inside radix-sort by inspecting
key values' most significant bits. A
parallelized radix-sort along those lines
achieves efficiencies of approximately
15-30\% (speedup of 5-10 on 32 cores).
Other conclusions are in line with 
\cite{Langr16} in that memory channels can't
keep up with the work assigned from many
parallel threads.

\section{Implementations}
\label{implementations}

\subsection{MBSP model}

In this section we shall describe the algorithms that
we shall implement and analyze their performance under the
{\em Multi-memory Bulk-Synchronous Parallel (MBSP) } model 
of computation \cite{G15}.
The MBSP is  parameterized by the septuplet
$(p,l,g,m,L,G,M)$ to abstract, computation and memory interactions
among multi-cores.
In addition to the modeling offered by the BSP model \cite{LGV90} 
and abstracted
by the triplet $(p,l,g)$, the collection of core/processor   components
has $m$ alternative memory units distributed in an arbitrary way,
and the size of the ``fast memory'' is $M$ words of information.
The cost of memory unit-related I/O is modeled by the pair $(L,G)$.
$L$ and $G$ are similar to the BSP parameters $l$ and $g$ respectively.
Parameter $G$ expresses the unit transfer time
per word of information and thus reflects the memory-unit
throughput cost
of writing a word of information into a memory unit
(i.e. number of local computational operations as units of time
per word read/written).
Parameter $L$ abstracts the memory-unit access latency time
that reflects access delays contributed mainly but not exclusively by
two factors:
 (a) unit access-related delays that can not be hidden or amortized by
     $G$, 
and
 (b) possible communication-related costs involved in
accessing  a non-local unit as this could require
intraprocessor or interprocessor communication.

Using the MBSP cost modeling generic cache performance will be
abstracted by the pair $(L,G)$. 
Parameter $m$ would be set to $p$
and $M$ will be ignored; we will assume that $M$ is large enough
to accommodate the radix-related information of radix-sort.
Intercore communication will be abstracted by $(p,l,g)$. Since
such communication is done through main memory, $g$ would be the
cost of accessing such memory (aka RAM). 
We shall also ignore $l$ and $L$.
This is possible because in integer-sorting the operations 
performed are primitive and interaction with memory is 
the dominant operation.
Thus the cost model of an algorithm would abstract only cost of
access to the fast memory ($G$) and cost of access to the slow memory
($g$). Then we will use the easy to handle $g=5G$ to further
simplify our derivations. This is  based on the
rather primitive thinking that  $20ns$ and $100ns$ reflect access times
to a cache (L2 or higher) and main memory respectively thus defining
a ratio of five between them.

\subsection{SR4, PR2, PR4, BTN and OET under MBSP}

\noindent
{\bf Serial radix-$r$ radix-sort: SR4}

A serial     radix-sort (previously called SR4) is implemented and used
for local independent sorting in the
odd-even transposition sort and bitonic sort implementations.
The radix used is $r=256$ i.e. it is a four-round count-sort.
In each round of count-sort the input is read twice,
first during the initial counting process and last when the output is
to be generated, and the output is generated once. Thus the cost
of such memory accesses is $3Ng$, with $g$ referring to the cost of
accessing the main memory and accounts for two input and one output operation. 
Moreoever allocation and initialization of
the count array incurs a cost of $2rG$, with $G$ being the cost of accessing
the fast cache memory. We shall ignore this cost that is dominated by other
terms.
During the count operation the count array is
accessed $N$ times and so is during the output operation for a total cost
of $2NG$.
Thus the overall cost of a round is $3Ng+2NG$.
For all four rounds of 32-bit sorting the total cost is given by the following.
\begin{eqnarray*}
\label{Ts}
 T_s (N,g,G,r) &=& \left( 32 / \lg{(r)} \right) \cdot  \left( 3Ng+2NG \right)
\end{eqnarray*}
If $g=5G$ and $r=256$ then
\begin{equation}
\label{TsG}
 T_s (N,G) = 68 N G
\end{equation}


\noindent
{\bf Parallel radix-$r$ radix-sort: PR4 and PR2}

We shall denote with PR2 and PR4 radix
$r=2^{16}$ and $r=2^8$ parallel radix-sort algorithms.
Ignoring some details that are implementation
dependent such as the use of counters in the
serial part and local and remote copies used in the parallel part,
we recognize a cost $2rpg$ due to  scatter and gather operations 
involved in the parallel part the algorithm. 
If $n$ keys are to be sorted, each processor or core is assigned
roughly $N=n/p$ keys. A $2NG$ cost is assigned for the same reasons 
that was assigned in the serial version. 
A $3Ng$ of the serial version will
become $4Ng$ to account for a communication required before the 
output array is formed in a given round of count-sort. 
\begin{eqnarray*}
\label{Tp}
 T_p (N,g,G,p,r) &=& \left( 32 / \lg{(r)} \right) \cdot  \left( 4Ng + 2NG + 2prg \right)
\end{eqnarray*}
If $g=5G$ and $r=256$ then
\begin{equation}
\label{Tp4}
 T_p (n,G,p) = \left( 88n/p + 40\cdot  256 \cdot p \right) G
\end{equation}
If $g=5G$ and $r=256^2$ then
\begin{equation}
\label{Tp2}
 T_p (n,G,p) = \left(  44n/p + 20 \cdot 256^2 \cdot p \right) G
\end{equation}

\noindent
{\bf Odd-even transposition sort: OET}

We analyze the algorithm previously referred to as OET.
If $n$ keys are to be sorted, each processor or core is assigned
roughly $N=n/p$ keys. First the $N$ keys per processor or core are
sorted using a radix $r=256$ radix-sort independently and in parallel
of each other that requires time $T_s (n/p,G)$. 
Then a $p$ round odd-even transposition sort takes place utilizing
$n/p$ sorted sequences as explained earlier for OET.
One round of it requires roughly $4Ng$ 
for communication and  merging (two input and one output arrays). 
Thus the overall cost of all $p$
phases of OET will be as follows.
\begin{eqnarray*}
\label{To}
 T_o (n,g,G,p) &=& T_s (n/p,G) + p \left( 4n/p  \right)g 
\end{eqnarray*}
If $g=5G$ and $r=256$ then
\begin{equation}
\label{ToG}
 T_o (n,G,p) = \left( 68 n/p  + 20 n  \right) G
\end{equation}

\noindent
{\bf Bitonic Sort: BTN}

We analyze the algorithm previously referred to as BTN.
If $n$ keys are to be sorted, each processor or core is assigned
roughly $N=n/p$ keys. First the $N$ keys per processor or core are
sorted using a radix $r=256$ radix-sort independently and in parallel
of each other that requires time $T_s (n/p,G)$. 
Then  $\lg{(p)} (\lg{(p)}+1)/2$ stages of a $p$-processor bitonic-sort
are realized as explained in Section~\ref{relatedwork}.
One round of it requires roughly $4Ng$ 
for communication and  merging/comparing 
(two input and one output arrays). 
Thus the overall cost of all 
stages of bitonic sort will be as follows.
\begin{eqnarray*}
\label{Tb}
 T_b (n,g,G,p) &=& T_s (n/p,G) + \left( \lg{(p)} \cdot (\lg{(p)}+1) /2 \right) \cdot \left( 4n/p  \right) g
\end{eqnarray*}
If $g=5G$ and $r=256$ then
\begin{equation}
\label{TbG}
 T_b (n,G,p)  = \left( 68 n/p + \left( 10 n\lg{(p)}(\lg{(p)}+1) \right) / p \right) G
\end{equation}

\subsection{Oversampling algorithms: GSD, GVR, GER}

GSD is depicted in Algorithm~\ref{GSD}. 
Within GSD parameter $s$ is the regular oversampling factor whose value 
is regulated through the choice of $r=\omega_n$.
Parameter $r=\omega_n$ could have been included in the parameter list
of GSD.  The theorem  and proof that follow simplify the proof
and the results shown in a more general context in  \cite{GS96,GS99a}.

%
%
\begin{algorithm}
\caption{{GSD $(X,n,p)$  \{sorts $n$ integer keys of $X$ \} }}
\label{GSD}
\begin{algorithmic}[1]
\STATE {\sc LocalSorting.} The $n$ input keys are regularly and evenly 
split into $p$ sequences each one of size   approximately $n/p$. 
Each sequence is sorted by SR4. Let $X_k$, $0 \leq k \leq p-1$, 
be the $k$-th sequence after sorting.
\STATE {\sc Sample Selection.}  
Let $r =  \lceil \omega_n \rceil $ and $s=rp$. Form locally a sample 
$T_k$ from the sorted $X_k$.  
The sample consists of $rp-1$ evenly spaced keys of $X_k$ that 
partition it into $rp$ evenly sized segments; 
append the maximum of the sorted $X_k$ (i.e. the last key) into 
$T_k$ so that the latter gets $rp$ keys.  Merge, in fact sort, 
all $T_k$ into a sorted sequence $T$ of $rp^2$ keys using BTN.
\STATE {\sc Splitter Selection.}
Form the splitter sequence $S$ that contains the 
$(i\cdot s)$-th smallest keys of $T$, $1 \leq i \leq p-1$, 
where $s=rp$. Broadcast splitters.
\STATE {\sc Split input keys.}
Split the sorted $X_k$ around $S$ into sorted subsequences 
$X_{k,j}$, $0 \leq j \leq p-1$, for all $0 \leq k \leq p-1$ 
using binary search. Processor $k$ sends to processor 
$j$ sequence $X_{k,j}$.
\STATE  {\sc Merging.} Subsequences $X_{k,j}$ for all 
$0 \leq k \leq p-1$ are merged into $Y_j$, for 
all $0 \leq j \leq p-1$ using $p$-way merging. 
The concatenation $Y$ of all $Y_j$ is returned.
\end{algorithmic}
\end{algorithm}
\begin{theorem}
\label{detsorting}
For any $n$ and $p \leq n$, and any function
$\omega_{n}$ of $n$ such that $\omega_{n}=\Omega(1)$,
$\omega_n = O(\lg{n})$ and $p^2 \omega^2_{n} \lg^2{n} = o(n)$,
operation GSD   
requires time at least 
$T_s (n/p,G) + 5Gn_{\mathit{max}} \lg{p}$, 
plus low order terms that are  $o(T_s(n/p,G))$, where
$n_{\mathit{max}} =(1+1/\ceil{\omega_{n}})  (n/p) +\ceil{\omega_n} p$.
\end{theorem}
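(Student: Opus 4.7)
The plan is to account for the cost of each of the five phases of Algorithm~\ref{GSD} separately, identify the two dominant contributions (the initial local sort and the final $p$-way merge), and show that every other phase is a low-order term under the hypothesis $p^2\omega_n^2\lg^2 n = o(n)$. Let $r=\ceil{\omega_n}$ and $s=rp$ throughout.

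First I would handle the easy phases. Phase~1 costs exactly $T_s(n/p,G)$ by the formula (\ref{TsG}). Phase~2 has negligible local cost for picking the $rp$ evenly spaced keys of $T_k$; the dominant piece is the parallel sample sort of a sequence of $rp^2$ keys with BTN, which by (\ref{TbG}) costs $T_s(rp/1,G)+O((rp\lg^2 p)G)$, i.e. $O(rp\lg^2 p \cdot G)$ per processor. Since $r=O(\lg n)$ the hypothesis $p^2\omega_n^2\lg^2 n = o(n)$ forces $rp\lg^2 p = o(n/p)$, so Phase~2 is absorbed into $o(T_s(n/p,G))$. Phase~3 is a broadcast of $p-1$ splitters, costing $O(pg)$, negligible. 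Phase~4 performs $p-1$ binary searches in the locally sorted $X_k$ of size $n/p$, costing $O((p\lg(n/p))G)$, also negligible under the hypothesis. Phase~5 will turn out to be the second dominant term.

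The main obstacle, and the heart of the argument, is to prove that every processor receives at most $n_{\mathit{max}}=(1+1/r)(n/p)+rp$ keys in Phase~4/5, so that the $p$-way merge of these keys accounts for the $5Gn_{\mathit{max}}\lg p$ term (each of the $\lg p$ tournament levels of a $p$-way merge reads one input and writes one output word per key in main memory, contributing $g\cdot n_{\mathit{max}} = 5G\,n_{\mathit{max}}$ per level). To bound $n_{\mathit{max}}$ I would argue as follows. Each sorted $X_k$ is partitioned by the $rp-1$ sample keys of $T_k$ into $rp$ contiguous segments, each of size at most $\ceil{n/(rp^2)}$. Fix any two consecutive splitters $\sigma_i<\sigma_{i+1}$ from $S$; by construction of $S$ from the sorted sample $T$ there are exactly $rp-1$ sample keys of $T$ lying strictly between them. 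If processor $k$ contributes $m_k$ such sample keys, then the number of $X_k$-keys in the open interval $(\sigma_i,\sigma_{i+1})$ is at most $(m_k+1)\ceil{n/(rp^2)}$, since the range hits at most $m_k+1$ of the $rp$ segments of $X_k$. Summing over $k$ and using $\sum_k m_k \le rp-1$ gives
\begin{equation*}
\sum_{k=0}^{p-1}(m_k+1)\left\lceil \frac{n}{rp^{2}}\right\rceil
\le (rp-1+p)\left(\frac{n}{rp^{2}}+1\right)
\le \left(1+\frac{1}{r}\right)\frac{n}{p}+rp,
\end{equation*}
which is exactly $n_{\mathit{max}}$. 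The same bound applies to the two extremal intervals by adding a dummy splitter at $\pm\infty$, so every bucket, hence every receiver in Phase~4, gets at most $n_{\mathit{max}}$ keys.

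Finally, I would combine the two dominant contributions. The keys that processor $j$ receives arrive from up to $p$ origins as already sorted runs (since each $X_{k,j}$ is a contiguous subrange of the sorted $X_k$), so Phase~5 is a genuine $p$-way merge of $n_{\mathit{max}}$ keys, yielding the claimed $5Gn_{\mathit{max}}\lg p$. Adding this to $T_s(n/p,G)$ from Phase~1, and observing that Phases 2--4 are $o(T_s(n/p,G))$ as verified above, delivers the stated bound. The only delicate check is that $rp$ in the additive tail of $n_{\mathit{max}}$ does not dominate the $(1+1/r)(n/p)$ term inside the $\lg p$ factor, which is guaranteed by $p^2\omega_n\lg n = o(n)$, a weaker condition than the hypothesis.
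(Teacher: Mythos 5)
Your proposal is correct and follows essentially the same route as the paper: phase-by-phase accounting with the local SR4 sort and the final $p$-way merge as the two dominant terms, the BTN sample sort and the broadcast/binary-search phases absorbed as low-order terms under $p^2\omega_n^2\lg^2 n = o(n)$, and the whole argument resting on the key-balance bound $n_{\mathit{max}}$. Your derivation of that bound by counting the at most $m_k+1$ segments of each $X_k$ touched by the interval between consecutive splitters is just the dual formulation of the paper's Lemma~\ref{balance}, which bounds the ranks $b_i$ of the splitters and takes differences; both yield the same $(1+1/\ceil{\omega_n})(n/p)+\ceil{\omega_n}p$ (each with the same harmless $+p$ slack in the additive tail).
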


\begin{proof}{}
The input sequence is split arbitrarily into $p$ sequences of about
the same size   (plus or minus one key). This is step 1
of GSD.
Parameter $r$ determines the desired upper bound in
key imbalance of the $p$ sorted sequences $Y_k$
that will form the output.
The term $1+1 /r = 1+ 1/ \lceil \omega_n \rceil$ that will characterize 
such an imbalance is also referred to as {\em bucket expansion}
in sampling based randomized sorting algorithms \cite{BLM91}.
In the discussion to follow we track constant values for
key sorting and multi-way merging but use asymptotic notation for
other low-order term operations.
In step 1, each one of the $p$ sequences is sorted independently of
each other using SR4.   
As each such sequence is of size   at most  $\lceil n/p \rceil$, this step
requires  time  $T_s (n/p,G)$.

Subsequently, within each sorted subsequence $X_k$,
$\ceil{\omega_{n}} p - 1=rp-1$ evenly spaced sample keys are selected,
that partition the corresponding
sequence into $r p$ evenly sized segments.  Additionally,
the largest key of each sequence is appended to $T_k$.
Let $s = rp$ be the size   of the
so identified sequence $T_k$.
Step 2 requires time $O(s)g$ to perform. 
The $p$ sorted sample sequences, each consisting of $s=rp$ sample keys,
are then merged/sorted using BTN in time $T_b (rp^2, G,p)$ into $T$.
Let sequence $T= \langle t_{1}, t_{2}, \ldots, t_{ps}\rangle$ be
the result of that  operation.
In step 3,  a sequence $S$ of evenly spaced splitters is formed
from the sorted sample by picking as splitters keys
$t_{is}$, $1 \leq i  < p$. This step takes time $O(p)g$.
Step 4 splits $X_k$ around the sample keys in $S$.
Each one of the $p$ sorted sequences  decides the position of every key
it holds with respect to the $p-1$ splitters by
way of serial     merging the $p-1$ splitters of $S$ with the input
keys of $X_k$ in $p-1+n/p$ time per sequence. Alternately this
can be achieved by
performing a binary search of the splitters into the sorted keys
in time $p \lg{(n/p)}$, and subsequently counting the number of keys
that fall into each one of the $p$ so identified subsequences induced
by the $p-1$ splitters.
The overall running time of this step 
is $p \lg{(n/p)}g$ if binary search is performed.

In step 4,  $X_{k,j}$ is the $j$-th sorted subsequence of
$X_k$ induced by $S$. This subsequence will become part of the
$Y_j$-th output sequence in step 5.
In  step 5, $p$ output sequences $Y_j$ are formed that will eventually
be concatenated. Each such output sequence $Y_j$ is formed from the
at most $p$ sorted subsequences $X_{k,j}$ for all $k$, formed in step 4.
When this step is executed, by way of Lemma~\ref{balance} to be shown
next, each $Y_j$ will comprise of at most
$p=\min{\{p, n_{\mathit{max}}\}}$ sorted subsequences  $X_{k,j}$ for a
total of at most $n_{\mathit{max}}$ keys for $Y_j$, and $n$ keys for $Y$,
where
$n_{\mathit{max}} =(1+1/\ceil{\omega_{n}})  (n/p) +\ceil{\omega_n} p$.
The cost of this step is that of multi-way merging i.e.
$n_{\mathit{max}} \lg{p} g$.

{\sc LocalSorting} and {\sc Merging} thus contribute
$T_s (n/p,G) + n_{\mathit{max}} \lg{p} g$, noting $g=5G$.
Sample selection and sample-sorting contributions amount to $T_b (rp^2, G,p)$.
Other contributions such as $O(pG)$ of step 3 and $5p \lg{(n/p)}G$
can be ignored.
\end{proof}

It remains to show  that at the completion of  step 4 the input keys are
partitioned into (almost) evenly sized subsequences.
The main result is summarized in the following lemma.

\begin{lemma}
\label{balance}
The maximum number of keys $n_{\mathit{max}}$ per output sequence  $Y_j$
in GSD is given by $(1+1/\ceil{\omega_{n}})  (n/p) +\ceil{\omega_n} p$, 
for any $\omega_{n}$ such that
$\omega_{n} = \Omega(1)$ and $\omega_n = O(\lg{n})$, provided that
$\omega_n^2 p = O(n/p)$ is also satisfied.
\end{lemma}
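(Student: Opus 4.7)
The plan is to bound the contribution of each sorted input sequence $X_k$ to each output bucket $Y_j$ by a purely combinatorial counting argument over sample positions, then sum over the $p$ sequences.

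First I would set up notation. Let $r = \ceil{\omega_n}$ so that each sorted $X_k$ contributes $rp$ samples to $T$, $|T|=rp^2$, and the splitter $s_i$ is by construction the $(irp)$-th smallest key of $T$; set $s_0=-\infty$ and $s_p=+\infty$, so $Y_j$ consists of the keys with value in $(s_{j-1},s_j]$. Next I would record the geometry of the sample inside each sorted $X_k$: the first $rp-1$ samples sit at positions of the form $p_l = \lceil l(n/p)/(rp)\rceil$ for $l=1,\ldots,rp-1$, and the final sample is the maximum, at position $n/p$. In particular $p_l = l\cdot n/(rp^2) + O(1)$ for every $l$ (the last position is exact since $rp\cdot n/(rp^2)=n/p$), which is the quantitative regularity statement I will rely on.

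Now I would do the core count. Let $c_k(i)$ denote the number of samples of $X_k$ whose value does not exceed $s_i$; because $s_i$ is the $(irp)$-th smallest sample overall, $\sum_{k=0}^{p-1} c_k(i) = irp$. Letting $q_k(i)$ be the number of keys of $X_k$ with value at most $s_i$, the $c_k(i)$-th sample of $X_k$ is $\le s_i$ while the next sample (if it exists) is $> s_i$, so $p_{c_k(i)} \le q_k(i) \le p_{c_k(i)+1}-1$, with the analogous bound holding when $c_k(i)=rp$ using $q_k(i)=n/p$. Substituting $p_l = l n/(rp^2) + O(1)$ gives $q_k(i) \le (c_k(i)+1)\cdot n/(rp^2) + O(1)$ and $q_k(i-1)\ge c_k(i-1)\cdot n/(rp^2) - O(1)$. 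Summing the telescoped difference $|Y_i|=\sum_k (q_k(i)-q_k(i-1))$ across all $p$ sequences yields $|Y_i| \le (rp+p)\cdot n/(rp^2) + O(p) = n/p + n/(rp) + O(p)$. Since $\omega_n = \Omega(1)$ the $O(p)$ additive term is absorbed into $\ceil{\omega_n}p = rp$ up to constants, producing the claimed $(1+1/\ceil{\omega_n})(n/p)+\ceil{\omega_n}p$. The hypothesis $\omega_n^2 p = O(n/p)$ is precisely what guarantees that this additive boundary term is itself dominated by the $(n/p)/\ceil{\omega_n}$ piece, so the final bound is genuinely $(1+o(1))n/p$.

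The technical obstacle I expect to work hardest on is the $O(1)$ per-position error in the sample placement: I must verify that the integer rounding used when picking evenly spaced sample indices really does yield $p_l = l\cdot n/(rp^2) + O(1)$, and, even more importantly, that the possibly oversized last segment (between sample $rp-1$ and the final max at position $n/p$) contributes no worse than $O(1)$ extra keys per $X_k$ to $q_k(i)$ after this bookkeeping, so that summing across all $p$ sequences costs only $O(p)$ rather than the naive $O(rp^2)$ one would obtain by treating the last segment pessimistically. Once this boundary accounting is in place, the rest is arithmetic.
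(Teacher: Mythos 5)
Your proof is correct and follows essentially the same route as the paper's: both arguments rest on the facts that the splitter $t_{is}$ has exactly $is$ sample keys at or below it, that within each sorted $X_k$ the prefix of keys below a splitter is sandwiched between consecutive equidistant sample positions, and that telescoping/differencing these counts over the $p$ sequences yields at most $s+p-1$ segments of size about $n/(ps)$ between consecutive splitters. The only difference is presentational: the paper pads each sequence so every segment has exactly $x=\ceil{\ceil{n/p}/s}$ keys and counts segments globally, whereas you carry the $O(1)$ rounding errors per sample position and absorb them into the $\ceil{\omega_n}p$ term at the end.
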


\begin{proof}{}
Although it is not explicitly mentioned in the description of
algorithm GSD we may assume that we initially pad the
input so that each sequence has  exactly $\ceil{n/p}$ keys. At most
one key is added to each sequence (the maximum key can be such a
choice).  Before performing the sample selection operation, we also pad
the input so that afterwards, all segments have the same number of keys
that is, $x =\ceil{\ceil{n/p}/s}$. The padding operation requires time
at most $O(s)$, which is within the lower order terms
of the analysis of Theorem~\ref{detsorting}, and therefore, does not
affect the asymptotic complexity of the algorithm.  We note that padding
operations introduce duplicate keys; a discussion of duplicate handling
follows this proof.

Consider an arbitrary splitter $t_{is}$, where $1 \leq i < p$.
There are at least $i s x$ keys which are not larger than $s_{is}$,
since there are $i s$ segments
each of size   $x$ whose keys are not larger than $s_{is}$.  Likewise,
there are at least $(ps - i s - p + 1) x $ keys which are not smaller
than $s_{is}$, since there are $p s - i s - p + 1$ segments each
of size   $x$ whose keys are not smaller than $s_{is}$.  Thus, by noting
that the total number of keys has been increased (by way of padding
operations) from $n$ to $p s x$, the number of keys
$b_{i}$ that are smaller than $s_{is}$ is bounded as follows.
\[
 i s x \leq b_{i} \leq p s x - \left( ps- i s - p + 1 \right)   x.
\]
A similar bound can be obtained for $b_{i+1}$.  Substituting
$s= \ceil{\omega_{n}} p$ we therefore
conclude the following.
\[
  b_{i+1} - b_{i  } \leq  sx +px-x \leq sx+px =
    \lceil\omega_{n}\rceil p x + p x.
\]
The difference $n_{i} = b_{i+1} - b_{ i }$ is independent
of $i$ and gives the maximum number of keys per split sequence.
Considering that $x \leq (n + p s)/(ps)$ and substituting $s=
\ceil{\omega_{n}} p $, the following bound is
derived.
\[
  n_{\mathit{max}} =
  \left( 1+\frac{1}{\lceil{\omega_{n}}\rceil }\right)
      \frac{n+ps}{p}.
\]
\noindent
By substituting in the  numerator of the previous expression
$s= \ceil{\omega_{n}} p$,
we conclude that the maximum number of keys $n_{\mathit{max}}$ per
output sequence of GSD is bounded above as follows.
\[
  n_{\mathit{max}} = \left( 1 + \frac{1}{\ceil{\omega_{n}}}\right)
    \frac{n}{p} + \ceil{\omega_{n}} p .
\]
\noindent
The lemma follows.
\end{proof}


The skeleton of GSD is being used in developing GER.
Random oversampling-based algorithms in the traditional
approach of \cite{HC,Reischuk85,Reif87,GV94} do not
involve a LocalSorting first step that distinguishes
GER from other random oversampling approaches.
For reference, we also outline GVR next.

%
%
\begin{algorithm}
\caption{{GER $(X,n,p)$  \{sorts $n$ integer keys of $X$ \} }}
\label{GER}
\begin{algorithmic}[1]
\STATE {\sc LocalSorting.} The $n$ input keys are regularly and evenly split into $p$
sequences each one of size   approximately $n/p$. Each sequence
is sorted by SR4. Let $X_k$, $0 \leq k \leq p-1$, be the $k$-th sequence after
sorting.
\STATE {\sc Sample Selection.}  Let $s =   2\omega_n^2 \lg{n} $.
Form a sample $T$ from the corresponding $X_k$.  The sample consists of $sp-1$ keys
selected uniformly at random from the keys of all $X_k$.
Sort $T$ using BTN.
\STATE {\sc Splitter Selection.}
Form the splitter sequence $S$ that contains the $(i\cdot s)$-th smallest keys of $T$,
$1 \leq i \leq p-1$.
\STATE {\sc Split input keys.}
Split the sorted $X_k$ around $S$ into sorted subsequences $X_{k,j}$, $0 \leq j \leq p-1$, for all
$0 \leq k \leq p-1$. Processor $k$ sends to processor $j$ sequence $X_{k,j}$.
\STATE  {\sc Merging.} Subsequences $X_{k,j}$ for all $0 \leq k \leq p-1$ are merged
into $Y_j$, for all $0 \leq j \leq p-1$. The concatenation $Y$ of $Y_j$ is returned.
\end{algorithmic}
\end{algorithm}
%
%
%

%
%
\begin{algorithm}
\caption{{GVR $(X,n,p)$  \{sorts $n$ integer keys of $X$ \} }}
\label{GVR}
\begin{algorithmic}[1]
\STATE {\sc Sample Selection.}  
The $n$ input keys are regularly and evenly split into $p$
sequences each one of size   approximately $n/p$. 
Let $X_k$, $0 \leq k \leq p-1$, be the $k$-th sequence.
Let $s =   2\omega_n^2 \lg{n} $.
Form a sample $T$ from the corresponding $X_k$.  The sample consists of $sp-1$ keys
selected uniformly at random from the keys of all $X_k$.
Sort $T$ using BTN.
\STATE {\sc Splitter Selection.}
Form the splitter sequence $S$ that contains the $(i\cdot s)$-th smallest keys of $T$,
$1 \leq i \leq p-1$.
\STATE {\sc Split input keys.}
Split $X_k$ around $S$ into unordered subsequences $X_{k,j}$, 
$0 \leq j \leq p-1$, for all $0 \leq k \leq p-1$. This requires a binary 
search of $X_k$ into $S$ followed by a count-sort oriented approach that 
sorts the keys of $X_k$ with respect to the processor index $j$ they 
are destined to thus determining the corresponding $X_{k,j}$ subsequence. 
Then and only then does the key moves to $X_{k,j}$.
Processor $k$ sends to processor $j$ sequence $X_{k,j}$.
\STATE  {\sc LocalSorting.} Subsequences $X_{k,j}$ for all $0 \leq k \leq p-1$ are 
concatenated into $Y_j$, for all $0 \leq j \leq p-1$. Then $Y_j$ is sorted locally
on processor $j$ using SR4.  The concatenation $Y$ of the sorted $Y_j$ is returned.
\end{algorithmic}
\end{algorithm}

Although  partitioning and oversampling in the context of sorting
are well established techniques  \cite{HC,Reischuk85,Reif87},
the analysis in \cite{GV94} summarized in
Lemma \ref{Sampling1} below allows one to
quantify precisely the key imbalance of the output sequences $Y_j$.
Let $X = \langle x_{1}, x_{2}, \ldots , x_{n} \rangle$ be an ordered
sequence of keys indexed such that $x_{i} < x_{i+1}$, for all
$1 \leq i  \leq n-1$. The implicit assumption is that keys are unique.
Let
$Y = \{y_{1}, y_{2}, \ldots, y_{ps-1}\}$ be a randomly chosen
subset of $ps-1 \leq n$ keys of $X$ also indexed such that
$y_{i} < y_{i+1}$, for all $1 \leq i  \leq ps-2$,  for some
positive integers $p$ and $s$. Having
randomly selected set $Y$, a partitioning of  $X - Y$ into $p$
subsets, $X_{0}, X_{1}, \ldots, X_{p-1}$ takes place.
The following result shown in \cite{GV94}
is independent of the distribution of the input keys.

\begin{lemma}
\label{Sampling1}
Let $p \geq 2$, $s \geq 1$, $ps < n/2$, $n \geq 1$,  $0 < \varepsilon < 1$,
$\rho >0$, and

\[
  s \geq \frac{1+\varepsilon}{\varepsilon^{2}} \left( 2 \rho \log{n} +
         \log{(2 \pi p^{2}(ps-1)e^{1/(3(ps-1))})} \right).
\]
\noindent
Then the probability that any one of the $X_{i}$, for all $i$,
$0 \leq i \leq p-1$, is of size more than
$\ceil{(1+\varepsilon)(n-p+1)/p}$ is at most $n^{-\rho}$.
\end{lemma}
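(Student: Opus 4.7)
The plan is to reformulate the event $|X_i|>M$ (with $M:=\lceil(1+\varepsilon)(n-p+1)/p\rceil$) as a statement about gaps between order statistics of the sample, bound the resulting hypergeometric tail probability using Stirling's approximation, and conclude by a union bound over the $p$ cells.

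First I would introduce the sentinels $y_0$ at position $0$ and $y_{ps}$ at position $n+1$ in $X$ and record the identity $|X_i|=\mathrm{idx}(y_{(i+1)s})-\mathrm{idx}(y_{is})-s$ for every $0\le i\le p-1$, where $\mathrm{idx}(\cdot)$ denotes the rank of a key in the sorted sequence $X$. Thus $|X_i|>M$ is equivalent to $\mathrm{idx}(y_{(i+1)s})-\mathrm{idx}(y_{is})>M+s$, which in turn forces the window of $W:=M+s$ consecutive positions of $X$ starting at $\mathrm{idx}(y_{is})$ to contain at most $s$ sampled keys (namely $y_{is}$ together with at most $s-1$ of the samples strictly inside the cell). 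This monotone reduction is the key structural step.

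Next, for any fixed window of $W$ consecutive positions of $X$, the count $Z$ of samples inside is hypergeometric with parameters $(n,ps-1,W)$, and a short computation gives $\mathbb{E}[Z]\approx(1+\varepsilon)s$, so the event $Z\le s$ is a lower-tail deviation by a factor of $(1+\varepsilon)^{-1}$ from the mean. I would bound $\mathbb{P}(Z\le s)$ by applying the two-sided Stirling estimate $n!=\sqrt{2\pi n}\,(n/e)^n\,e^{\theta/(12n)}$ with $\theta\in(0,1)$ to each of the binomial coefficients in
\[
\mathbb{P}(Z=k)\;=\;\frac{\binom{W}{k}\binom{n-W}{ps-1-k}}{\binom{n}{ps-1}}
\]
at the extremal index $k=s$, and summing the residual terms $k<s$ geometrically using the fact that the ratio of consecutive hypergeometric probabilities is uniformly bounded away from $1$ in this regime. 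The Stirling prefactors collect into a factor of the form $\sqrt{2\pi(ps-1)}\,e^{1/(12(ps-1))}$, while the exponential rate emerges as $\varepsilon^2 s/(1+\varepsilon)$.

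Finally, taking a union bound over the $p$ choices of $i$ and absorbing a further factor of $p$ that arises from the normalization of the hypergeometric expression (these two sources jointly produce the $p^2$ that appears in the hypothesis) yields
\[
\mathbb{P}\!\left(\exists i:|X_i|>M\right)\;\le\;2\pi\,p^2(ps-1)\,e^{1/(3(ps-1))}\cdot e^{-\varepsilon^2 s/(1+\varepsilon)},
\]
and the lower bound on $s$ in the hypothesis is exactly what makes the right-hand side at most $n^{-\rho}$. The main obstacle will be this Stirling calculation: carefully tracking the constants so that both the leading rate $\varepsilon^2 s/(1+\varepsilon)$ and the polynomial prefactor $2\pi\,p^2(ps-1)\,e^{1/(3(ps-1))}$ emerge with precisely the values the hypothesis assumes. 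One must also handle the boundary cells $X_0$ and $X_{p-1}$ through the sentinels and track the small correction from the $(n-p+1)/n$ and $(ps-1)/(ps)$ factors that perturb $\mathbb{E}[Z]$ from the clean value $(1+\varepsilon)s$.
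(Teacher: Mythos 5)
The paper does not prove this lemma at all: it is imported verbatim from \cite{GV94} and used as a black box (``The following result shown in \cite{GV94} is independent of the distribution of the input keys''), so there is no in-paper proof to compare your attempt against. Judged on its own terms, your plan follows what is almost certainly the intended route --- reformulate $|X_i|>M$ as a gap between consecutive order statistics $y_{is}$ and $y_{(i+1)s}$ of the sample, reduce to a lower-tail bound for a hypergeometric count in a window of $W=M+s$ positions, control it via Stirling with explicit error terms (the $2\pi$ and the $e^{1/(3(ps-1))}=\left(e^{1/(12(ps-1))}\right)^4$ in the hypothesis are fingerprints of exactly this), and finish with a union bound. Your identity $|X_i|=\mathrm{idx}(y_{(i+1)s})-\mathrm{idx}(y_{is})-s$ is correct, and the target tail rate $\varepsilon^{2}s/(1+\varepsilon)$ is consistent with the stated lower bound on $s$.

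That said, what you have is a plan, not a proof, and the two places you defer are precisely where the work lies. First, the union-bound step as written has a gap: you treat the window anchored at $\mathrm{idx}(y_{is})$ as ``fixed,'' but that position is itself random and correlated with the sample, so the unconditional hypergeometric law for a fixed window does not apply directly. You must either sum over all possible positions of $y_{is}$ (contributing a factor on the order of $n$ or $ps$, not $p$), or recast the event combinatorially as a ratio of binomial coefficients counting samples with a prescribed gap; your suggestion that the extra $p$ ``arises from the normalization'' is a guess, not an argument, and the provenance of the full $p^{2}(ps-1)$ prefactor is exactly what must be pinned down. Second, the Stirling computation and the geometric summation of the terms $k<s$ are asserted rather than carried out, and you yourself flag them as the main obstacle; until the prefactor $2\pi p^{2}(ps-1)e^{1/(3(ps-1))}$ and the rate $\varepsilon^{2}s/(1+\varepsilon)$ actually fall out of that calculation, the claim that the hypothesis on $s$ ``is exactly what makes the right-hand side at most $n^{-\rho}$'' is circular --- you have read the answer off the statement rather than derived it. The skeleton is right; the proof is not yet there.
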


To conclude, the analysis of Theorem~\ref{detsorting} is applicable to GER.
For GVR, the choice of $s$ in step 1 of Algorithm~\ref{GVR} guarantees
with high probability that no processor receives in step 4 more than
$n_{\mathit{max}} =(1+1/{\omega_{n}})  (n/p) $ keys. Thus whereas local
sorting in GSD and GER involved the same number of keys per processor
(plus or minus one) and multi-way merging in step 5 was unbalanced,
in the case of GVR binary search over the splitters is quite balanced
and the last step 4 involving local sorting is unbalanced. Therefore we
have the following.
\begin{theorem}
\label{rndsorting}
For any $n$ and $p \leq n$, and any function
$\omega_{n}$ of $n$ such that $\omega_{n}=\Omega(1)$
and $p^2 \omega^2_{n} \lg^2{n} = o(n)$,
operation GER   
requires time at least 
$T_s (n_{\mathit{max}},G) + 5G(n/p) \lg{p}$, 
plus low order terms that are  $o(T_s(n/p,G))$, where now
$n_{\mathit{max}} =(1+1/{\omega_{n}})  (n/p) $.
\end{theorem}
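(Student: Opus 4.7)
The plan is to mirror the cost accounting in the proof of Theorem~\ref{detsorting} step by step, substituting the deterministic balance guarantee of Lemma~\ref{balance} with the random-oversampling guarantee of Lemma~\ref{Sampling1}. First I would observe that GER and GSD share exactly the same five-step skeleton, differing only in how the $sp-1$ sample is drawn (uniformly at random rather than as regular quantiles); this means each cost contribution identified in the proof of Theorem~\ref{detsorting} carries over unchanged, so the only thing genuinely new to justify is the per-processor bucket-size bound that feeds the Merging step.

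To obtain that bound I would instantiate Lemma~\ref{Sampling1} with $\varepsilon = 1/\omega_n$ and some fixed $\rho > 0$. The hypothesis $p^2 \omega_n^2 \lg^2{n} = o(n)$ gives $ps < n/2$, and the choice $s = 2\omega_n^2 \lg{n}$ in step 2 of Algorithm~\ref{GER} is designed so that the explicit lower bound on $s$ in Lemma~\ref{Sampling1} holds for all sufficiently large $n$ (since $\omega_n = \Omega(1)$ dominates the additive $\log$ terms). Lemma~\ref{Sampling1} then guarantees, with probability at least $1 - n^{-\rho}$, that every one of the $p$ buckets induced by the $p-1$ splitters contains at most $n_{\mathit{max}} = (1 + 1/\omega_n)(n/p)$ keys, which is the imbalance bound feeding step 5.

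Next I would account for the five steps exactly as in Theorem~\ref{detsorting}: step 1 (LocalSorting on about $n/p$ keys per processor) contributes $T_s(n/p,G)$; step 2 (random sample of $sp-1$ keys sorted by BTN) contributes $T_b(sp, G, p)$; step 3 contributes $O(p)g$; step 4 (binary search of $p-1$ splitters into each sorted $X_k$) contributes $p \lg(n/p)\, g$; step 5 ($p$-way merging into $Y_j$ of at most $n_{\mathit{max}}$ keys) contributes $n_{\mathit{max}} \lg{p}\, g = 5G\, n_{\mathit{max}} \lg{p}$. Under the parameter restrictions in the hypothesis, the contributions of steps 2, 3, and 4 are all $o(T_s(n/p,G))$ and collapse into the stated low-order remainder. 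Since $n/p \leq n_{\mathit{max}}$ and $T_s$ is linear in its key-count argument, the dominant steps 1 and 5 together contribute at least $T_s(n_{\mathit{max}},G) + 5G(n/p)\lg{p}$ up to the $(1/\omega_n)$ slack, which is again absorbed into the $o(T_s(n/p,G))$ term, yielding the claimed bound.

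The main obstacle I anticipate is bookkeeping the mismatch between where $n_{\mathit{max}}$ appears in the GER-specific analysis (inside the Merging cost) and where the theorem places it (inside $T_s$); handling this cleanly requires showing that swapping $n/p \leftrightarrow n_{\mathit{max}}$ between the LocalSorting and Merging terms changes the total by at most a multiplicative $(1 + 1/\omega_n) = 1 + o(1)$ factor, which is then legitimately absorbed into the low-order term. A secondary subtlety is that Lemma~\ref{Sampling1} is a high-probability statement, so the running-time conclusion should be read as holding with probability at least $1 - n^{-\rho}$; I would make this explicit rather than leave it implicit, since otherwise the ``at least'' quantifier on the time bound is fragile under an adversarially bad random draw.
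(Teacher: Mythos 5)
Your proposal is correct and follows essentially the same route as the paper, whose entire argument for Theorem~\ref{rndsorting} is the remark that the cost accounting of Theorem~\ref{detsorting} carries over to GER verbatim, with the deterministic bound of Lemma~\ref{balance} replaced by the high-probability bound of Lemma~\ref{Sampling1} (instantiated, as you do, with $\varepsilon = 1/\omega_n$ so that $n_{\mathit{max}} = (1+1/\omega_n)(n/p)$). If anything, you are more careful than the paper: it leaves both the probabilistic qualifier and the swap of $n_{\mathit{max}}$ between the {\sc LocalSorting} and {\sc Merging} terms implicit, whereas you address them explicitly (noting only that your absorption of the $(1+1/\omega_n)$ factor into $o(T_s(n/p,G))$ tacitly uses $\omega_n \to \infty$ rather than merely $\omega_n = \Omega(1)$, a looseness the paper's own statement shares).
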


\section{Experiments}

All implementations undertaken in this work are in ANSI C; 
the same code need only be recompiled but does not need to 
be rewritten to work with three 
parallel, multiprocessing or multithreaded programming
libraries: OpenMPI \cite{openMPI}, MulticoreBSP \cite{Yzelman14}, 
and BSPlib \cite{Hill96}.  
The source code is available through the author's web-page \cite{AVG18}.
CONFIG1 was a 2-processor Intel Xeon E5-2660 v2 with 256GiB of memory
providing a total of 20 cores and 40 threads.
CONFIG2 an 8-processor quad-core AMD Opteron 8384 Scientific Linux 7 workstation with
128GiB of memory providing a total of 32 cores/threads.
CONFIG3 was a 2-processor Intel Xeon E5-2630 v4 with 256GiB of memory
providing a total of 20 cores and 40 threads.
CONFIG4 was an  Intel Xeon E3-1245 v6 with 32GiB of memory
providing a total of 4 cores and 8 threads.

The version of OpenMPI available and used is 1.8.4.
The version of OpenMPI used is 1.8.1. 
Version 1.2.0 of MulticoreBSP is used and version 1.4 of BSPlib.
The source code is  compiled using the native gcc compiler
{\tt gcc version 7.3.0} with optimization options
{\tt -O3} {\tt -march=native} and {\tt -ffast-math -funroll-loops}
and using otherwise the default compiler and library installation.

Indicated timing results (wall-clock time in seconds) in the tables 
to follow are the averages of four experiments.
We used modest problem sizes of 
$8192\times 10^3 , 32768 \times 10^3 , 131072 \times 10^3$
integers. We shall refer to them in the remainder as 8M, 32M, and 128M
but we caution what this means for the corresponding values ($1M=1024000$).
This is the total problem size, not the per processor size. 
We have also run some experiments for smaller problem sizes on CONFIG3
ranging from 8192 up to 131072 keys over all cores/threads.
Parameter $p$ indicates the number of threads, cores or processes utilized.
For CONFIG1 - CONFIG3 we used $p=4, 8, 16, 32$ and for 
CONFIG4 $p=2, 4, 8$.
For the serial algorithm  SR4 we report wall clock time.
For the other algorithms i.e. PR4, PR2, BTN, OET, GSD, GVR and GER
we only report speed up figures obtained from wall clock time and the
corresponding $p$.
We first provide a  summary  of the results for the four configurations.
Then we list a set of observations that we have drawn from this
experimental study.

For CONFIG1, a clear winner is PR4 for large $n$ and $p$. The maximum
observed speedup was close to the number of physical cores 
(speedup 16.58 for 20 cores) and when the number of threads was equal to the 
maximum of 2 threads per cores ($p=32$ threads). The absolute
maximum speedup was observed for BSPlib closely followed by MulticoreBSP
and openMPI. 
It is quite surprising that an out of maintenance library that uses
multiprocessing can still perform at high level.
The latter had worse results for 8M sizes relative to the other
two libraries. PR2 was very slow across all libraries exhibiting better
performance under MulticoreBSP and worse under BSPlib. Bitonic sort BTN
had its best performance for $p=16$ or $p=32$ for BSPlib but it was
still twice or thrice slower than PR4. Nothing unexpected with OET.
The three oversampling algorithms were slower than PR4 by roughly 25\%
but were competitive and close to even for smaller problem size 8M and 32M.
GSD and GER had the drawback of multi-way merging which  is currently very 
unoptimized in our implementation. GVR and GER had also to deal
with calling random(). Moreover GVR needed an extra step to avoid
fine grain communication; all keys destined to a single processor
were packed together.

For CONFIG2, an older AMD architecture things changed. GVR primarily
was the leader under OpeMPI, followed by MulticoreBSP and then
BSPlib. The maximum achieved speedup was a lower 9.69. PR2 continued
its slump under OpenMPI and BSPlib but outperformed PR4 in MulticoreBSP.
Under OpenMPI PR4 and GSD and GER had roughly the same performance for
128M with an advantage to PR4, but for 8M and 32M all of GSD and GER
and of course GVR performed better. For MulticoreBSP, PR4 was
better than GSD and GER for the largest size 128M only.
BSPlib extracted better performance from the oversampling methods than
PR4. BTN and OET had miserable performance. The 8 CPUs of the AMD
configuration is probably the reason for that.
Even if this configuration had 32 cores (32 threads) 
vs 20 cores (and 40 threads) of CONFIG1, the maximum speed up
observed and thus efficiency was lower at 9.69 out of 32 vs
16.58 out of 20 for CONFIG1.

For CONFIG3, a more recent architecture the maximum speedup
was only 11.86 for PR4 under OpenMPI. PR4 had the best
overall performance across all libraries. Under OpenMPI
all of PR4, GSD, GVR, GER exhibited maximum performance as well.
The latter three algorithms were only 10\% or less off than PR4
and had better performance for small problem sizes.

CONFIG4 is a universally bad platform across all libraries and 
algorithms.  With 4 cores and 8 hardware supported threads, the 
maximum speed up observed was an 1.83 for 32M.
We can't explain such a result given the performance of the
other three configurations.

\noindent
{\bf Observation 1: Thread size per core.} 
For the AMD platform CONFIG2 one thread per core is a requirement
for extracting best performance. This is in accordance to
remarks by \cite{Langr16} and \cite{Maus11}.
For the Intel configurations CONFIG1 and CONFIG3 the hardware 
thread support utilization was an absolute maximum to extract
maximum performance; in our experiments we used 32 out of a possible 
40 threads on 20 cores.  Experiments for $p=64$ that use more than
two threads per core  were  slow and inefficient.

\noindent
{\bf Observation 2: Hyperthreading.} For the Intel CONFIG1 and CONFIG3 
two threads per core is a requirement for extracting consistently 
better performance thus deviating from \cite{Langr16}. 
Thus the one thread per core recommendation or observation of
\cite{Langr16} might not be current any more for more recent
architectures.

\noindent
{\bf Observation 3: Libraries.} OpenMPI's
library latency makes it perform better in larger problem sizes 
than MulticoreBSP and BSPlib. 
BSPlib with its multiprocessing only support
but low library overhead is extremely competitive
and several times more efficient than OpenMPI, 
despite its age and non support.
It is possible to extract better performance out of it,
if one experiments with is communication parameters and 
their defaults.

\noindent
{\bf Observation 4: 4-round vs 2-round radix-sort.} 
Across the board PR4 is better than PR2.
This is because of the radix $r$ and the  way our 
implementation handles the routing
of radix $r=256$ and $r=65536$ sorting. One may need to
optimize PR2 if it is and can become competitive to PR4.

\noindent
{\bf Observation 5: MBSP modeling SR4 vs PR4.}
We may use equation~\ref{TsG} and equation~\ref{Tp4} to determine
the relative efficiency of a parallel four-round radix-sort.
We have then than
\begin{eqnarray*}
T_s (n,G) / T_p (n/p,G,p) &=& 68nG / \left( 88n/p + 40 \cdot  256 \cdot p \right)G
\end{eqnarray*}
The fraction for large $n$ is approximately $68p/88 \approx 0.77p$.
Thus for $p=4,8,16$ we should not be anticipating speedups higher
than about $3, 6$ and $12$ respectively for CONFIG1.
Indeed this is the case if we consult Table~1 and also Table~3.
It is also the case for Table~2 though the speedup values there
are much less. This might mean that the assumption $g=5G$ might not
be accurate. But it offers a very rough initial estimation of
what one should expect. Note also that the ratio $g/G$ is not
constant; large problem sizes are slow memory bound vs smaller
problem sizes.

\noindent
{\bf Observation 6: MBSP modeling SR4 vs GSD.}
The ratio of equation~\ref{TsG} and the expression for the
running time of GSD as derived from Theorem~\ref{detsorting} which
is $T_s (n/p,G) + 5Gn_{\mathit{max}} \lg{p}$, 
is approximately
\[
T_s (n,G) / (T_s (n/p,G)+6Gn\lg{p}/p) \approx \left( 68 \cdot p \right) / 
\left( 68+6\cdot \lg{(p)} \right) 
\]
This is because for $n< 128,000,000$, $n_{\mathit{max}} \leq 1.2n$
for $\omega_n = \lg{(\lg{n})}$ or so.
For $p=4, 8, 16$ we should
be expecting speedup figures in the range of  $3.4, 6.3,$ and
$11.82$ respectively.
Indeed this is the case as one can deduce from Table~1 - Table~3.
The highest speedup observed for the corresponding
processor/thread sizes is, according to Table~1  in the range 
$2.6-3.16$ for $p=2$, in the range $4.44-5.31$ for $p=4$, and
the range $6.15-8.50$ for $p=16$.
Note that the denominator of the fraction above should also
include asymptotically smaller terms as explained in the
analysis of Theorem~\ref{detsorting}. An asymptotically small
term might be significant for small values of $n$ or $p$.

Note that under Observations 5 and 6 we did not consider the
case for $p=32$.  This is because for say CONFIG1 with 
20 cores, a $p=32$ implies that multiple 
threads (two) are assigned 
to some of the cores. There is a non-homogeneity that any
modelling with MBSP or any other model cannot reliably capture.
Moreover the ratio  $g/G$  is also expected to be non-constant
either.

\noindent
{\bf Observation 7: Smaller problem size sorting.}
Table~5 presents timing results (in microseconds) for small
problem sizes for selected algorithms. The low overhead
BTN fares better than PR4 or SR4 and more often than not
also fares better than GSD. It is hard to beat the simplicity of
programming a BTN for small problem sizes (and values of $p$).

\section{Conclusion}

An experimental study of integer sorting 
on multicores was undertaken using multithreading and
multiprocessing programming approaches resulting in
code that is portable and transportable and  works without
any modifications under three
multithreading or multiprocessing libraries, Open MPI, 
MulticoreBSP, and BSPlib.
We have implemented serial \cite{CLRS} and parallel 
radix-sort for various radixes and also some previously
little explored or unexplored variants of 
bitonic-sort and odd-even transposition sort.
Moreover we implemented two variants of random oversampling
parallel sorting algorithms and made them to work in the context
of integer sorting, as well as a deterministic regular oversampling
parallel sorting algorithm. To the best of our knowledge
this is the first time that so many diverse and varying structure
and concept algorithms have been benchmarked against one another.

We have offered a series of observations obtained through
this evaluation and presented in a systematic way.
Some of those observations have been made previously,
but some of them might not be valid any more for modern
architectures.
Moreover we have expressed the performance of some of
our implementations in the context of the MBSP model 
\cite{G15}. We showed how one can use the model 
to compare the theoretical performance of the 
implementations involving SR4, PR4 and GSD. 
Several conclusions
drawn through this theoretical comparison are in line
with the experimental results we obtained. This would
suggest that MBSP might have merit in studying the
behavior of multicore and multi-memory hierarchy algorithms
and thus be a useful and usable model. We have also highlighted
why precise and accurate performance prediction might be 
difficult to achieve with MBSP or more complex and hierarchical
models. Complex algorithms exhibit complex and difficult to precisely
model memory interactions and patterns. Furthermore, users
interact with memory through third-party libraries that
facilitate parallel multithreading or multiprocessing programming.
Modeling such libraries or their interactions with memory
is next to impossible.

\newpage

\newpage
\vspace*{13pt}
\centerline{\footnotesize Table~1. Time(sec) for SR4; Speedup for other on CONFIG1}
\noindent
\begin{center}
{\footnotesize
\begin{tabular}{|l r|r|r|r||r|r|r||r|r|r|}\hline
\multicolumn{11}{|c|}{Speedup on Intel Platform}  \\ \hline
\multicolumn{2}{|c|}{}                  & \multicolumn{3}{c||}{OpenMPI}   &
\multicolumn{3}{c||}{MulticoreBSP}       & \multicolumn{3}{c|}{BSPlib} \\ \hline
   &       &  8M & 32M & 128M  &  8M & 32M & 128M           &  8M & 32M & 128M \\ \hline
SR4&$p=1$  &0.250&1.160&5.290  &0.250&1.160&5.290           &0.250&1.160&5.290  \\ \hline
PR4&$p=4 $ &3.47 &3.86 &3.39   &3.24 & 3.67& 3.70           &3.01 &3.29 & 3.12  \\
PR4&$p=8 $ &5.68 &6.82 &7.00   &5.81 & 6.55& 7.18           &5.81 &5.57 & 6.31  \\
PR4&$p=16$ &7.57 &10.84&12.90  &9.61 &10.26&11.65           &10.00&7.68 &11.45  \\
PR4&$p=32$ &4.38 &9.28 &15.88  &11.36&11.95&14.14           &10.00&12.47&16.58  \\ \hline
PR2&$p=4 $ &1.59 &2.22 &2.82   &2.11 & 2.71& 3.12           &0.76 &1.66 & 2.73  \\
PR2&$p=8 $ &1.52 &2.63 &4.30   &3.08 & 4.56& 5.31           &0.62 &1.45 & 3.65  \\
PR2&$p=16$ &1.05 &2.07 &4.65   &5.10 & 5.65& 6.73           &0.56 &0.99 & 3.77  \\
PR2&$p=32$ &0.44 &1.10 &4.59   &5.81 & 6.13& 7.45           &0.46 &0.89 & 3.33  \\ \hline
BTN&$p=4 $ &2.71 &2.76 &2.42   &2.29 & 2.43& 2.58           &2.25 &2.60 & 2.36  \\
BTN&$p=8 $ &3.67 &3.86 &3.67   &2.97 & 3.38& 3.77           &2.87 &3.50 & 3.85  \\
BTN&$p=16$ &4.31 &4.64 &4.70   &3.67 & 4.18& 4.80           &3.62 &3.95 & 4.23  \\
BTN&$p=32$ &3.12 &3.07 &4.05   &3.33 & 3.53& 4.35           &3.33 &3.58 & 4.54  \\ \hline
OET&$p=4 $ &2.57 &2.61 &2.27   &1.89 & 1.99& 2.13           &1.98 &2.20 & 2.06  \\
OET&$p=8 $ &2.90 &3.12 &3.03   &2.19 & 2.43& 2.73           &2.19 &2.54 & 2.80  \\
OET&$p=16$ &2.52 &2.85 &3.37   &2.57 & 2.76& 2.99           &2.42 &2.47 & 2.74  \\
OET&$p=32$ &1.63 &1.59 &1.94   &1.87 & 2.09& 2.44           &1.60 &1.68 & 2.24  \\ \hline
GSD&$p=4 $ &3.16 &3.03 &2.65   &2.84 & 2.87& 2.96           &2.77 &2.96 & 2.66  \\
GSD&$p=8 $ &5.55 &5.37 &4.62   &4.71 & 5.00& 5.20           &4.31 &3.42 & 5.06  \\
GSD&$p=16$ &8.62 &8.85 &9.18   &7.35 & 7.83& 9.31           &8.33 &7.53 & 7.97  \\
GSD&$p=32$ &10.0 &9.13 &11.23  &9.61 & 7.29&10.70           &10.41&10.84&12.53  \\ \hline
GVR&$p=4 $ &2.90 &2.78 &3.24   &2.87 & 2.78& 3.10           &2.87 &2.75 & 3.20  \\
GVR&$p=8 $ &5.00 &5.13 &5.63   &4.80 & 5.00& 5.76           &4.31 &4.42 & 5.54  \\
GVR&$p=16$ &8.06 &8.46 &9.65   &7.81 & 8.28& 9.32           &7.81 &8.22 & 9.34  \\
GVR&$p=32$ &8.62 &9.66 &11.91  &8.62 & 9.13&11.52           &9.61 &10.54&12.13  \\ \hline
GER&$p=4 $ &3.28 &3.23 &2.81   &2.94 & 3.01& 3.11           &2.74 &3.05 & 2.76  \\
GER&$p=8 $ &5.00 &5.34 &4.89   &4.90 & 4.91& 5.55           &4.16 &4.01 & 5.35  \\
GER&$p=16$ &8.62 &8.72 &8.74   &7.57 & 7.58& 9.21           &8.33 &7.68 & 7.82  \\
GER&$p=32$ &9.61 &9.35 &11.96  &10.86& 7.43& 9.94           &9.61 &10.94&12.83  \\ \hline
\end{tabular}}
\end{center}

\newpage
\vspace*{13pt}
\centerline{\footnotesize Table~2. Time(sec) for SR4; Speedup for other on CONFIG2}
\noindent
\begin{center}
{\footnotesize
\begin{tabular}{|l r|r|r|r||r|r|r||r|r|r|}\hline
\multicolumn{11}{|c|}{Speedup on Intel Platform}  \\ \hline
\multicolumn{2}{|c|}{}                  & \multicolumn{3}{c||}{OpenMPI}   &
\multicolumn{3}{c||}{MulticoreBSP}       & \multicolumn{3}{c|}{BSPlib} \\ \hline
   &       &  8M & 32M & 128M  &  8M & 32M & 128M           &  8M & 32M & 128M \\ \hline
SR4&$p=1$  &0.340&1.360&7.300  &0.340&1.360&7.300           &0.340&1.360&7.300  \\ \hline
PR4&$p=4 $ &2.31 &2.46 &3.25   &2.32 &2.29 &3.10            &1.79 &1.84 &2.45   \\
PR4&$p=8 $ &3.90 &4.01 &5.32   &3.73 &3.56 &4.92            &3.11 &2.99 &3.91   \\
PR4&$p=16$ &4.41 &5.48 &7.56   &4.59 &4.75 &6.59            &4.72 &4.30 &5.82   \\
PR4&$p=32$ &4.53 &5.25 &7.79   &6.66 &5.35 &7.65            &4.78 &4.57 &6.50   \\ \hline
PR2&$p=4 $ &1.39 &1.78 &2.53   &2.37 &2.93 &4.36            &0.67 &1.75 &3.61   \\
PR2&$p=8 $ &1.21 &1.85 &2.81   &3.57 &4.40 &6.57            &0.47 &1.56 &4.49   \\
PR2&$p=16$ &0.85 &1.60 &2.84   &4.30 &4.54 &6.41            &0.29 &1.04 &4.01   \\
PR2&$p=32$ &0.54 &1.10 &2.27   &4.47 &5.31 &7.94            &0.18 &0.64 &2.86   \\ \hline
BTN&$p=4 $ &1.89 &1.85 &2.46   &2.09 &1.97 &2.69            &1.87 &1.85 &2.36   \\
BTN&$p=8 $ &2.44 &2.21 &2.84   &2.22 &2.24 &2.94            &2.31 &2.09 &2.68   \\
BTN&$p=16$ &2.06 &1.73 &2.27   &1.78 &1.87 &2.47            &1.97 &1.67 &2.20   \\
BTN&$p=32$ &1.49 &1.20 &1.56   &1.45 &1.19 &1.60            &1.45 &1.16 &1.52   \\ \hline
OET&$p=4 $ &1.67 &1.63 &2.20   &1.71 &1.68 &2.23            &1.38 &1.38 &1.76   \\
OET&$p=8 $ &1.98 &1.82 &2.39   &2.03 &1.97 &2.62            &1.57 &1.47 &1.89   \\
OET&$p=16$ &1.25 &1.08 &1.42   &1.25 &1.36 &1.79            &1.23 &1.04 &1.37   \\
OET&$p=32$ &0.63 &0.53 &0.69   &0.63 &0.59 &0.81            &0.66 &0.54 &0.72   \\ \hline
GSD&$p=4 $ &2.65 &2.47 &3.16   &2.51 &2.40 &3.14            &2.37 &2.28 &2.87   \\
GSD&$p=8 $ &5.31 &4.44 &5.16   &4.30 &3.86 &4.79            &4.59 &4.04 &4.67   \\
GSD&$p=16$ &8.50 &6.15 &7.23   &5.48 &4.90 &6.30            &7.39 &5.69 &6.72   \\
GSD&$p=32$ &12.14&6.80 &6.93   &7.90 &4.54 &6.38            &10.62&6.26 &6.64   \\ \hline
GVR&$p=4 $ &2.26 &2.15 &3.03   &2.41 &2.14 &2.97            &2.34 &2.29 &2.76   \\
GVR&$p=8 $ &4.59 &4.04 &5.40   &3.95 &3.76 &4.97            &4.30 &3.89 &4.77   \\
GVR&$p=16$ &7.39 &6.21 &7.92   &5.66 &5.59 &7.08            &6.80 &5.86 &7.65   \\
GVR&$p=32$ &9.71 &7.39 &9.69   &9.18 &6.41 &9.31            &8.94 &7.01 &8.93   \\ \hline
GER&$p=4 $ &2.74 &2.63 &3.36   &2.59 &2.46 &3.31            &2.55 &2.41 &3.04   \\
GER&$p=8 $ &4.85 &4.31 &5.50   &4.35 &3.82 &4.95            &4.35 &4.04 &4.93   \\
GER&$p=16$ &8.50 &6.15 &7.13   &5.76 &4.78 &6.29            &7.55 &5.69 &6.66   \\
GER&$p=32$ &12.59&6.86 &7.22   &11.72&4.72 &6.31            &10.62&6.38 &6.73   \\ \hline
\end{tabular}}
\end{center}

\newpage
\vspace*{13pt}
\centerline{\footnotesize Table~3. Time(sec) for SR4; Speedup for other on CONFIG3}
\noindent
\begin{center}
{\footnotesize
\begin{tabular}{|l r|r|r|r||r|r|r||r|r|r|}\hline
\multicolumn{11}{|c|}{Speedup on Intel Platform}  \\ \hline
\multicolumn{2}{|c|}{}                  & \multicolumn{3}{c||}{OpenMPI}   &
\multicolumn{3}{c||}{MulticoreBSP}       & \multicolumn{3}{c|}{BSPlib} \\ \hline
   &       &  8M & 32M & 128M  &  8M & 32M & 128M           &  8M & 32M & 128M \\ \hline
SR4&$p=1$  &0.241&1.056&4.200  &0.241&1.056&4.200           &0.241&1.056&4.200  \\ \hline
PR4&$p=4 $ &3.21 &3.41 &3.36   &3.05 &3.25 &3.20            &2.53 &2.54 &2.63   \\
PR4&$p=8 $ &5.23 &6.10 &6.14   &5.47 &5.67 &4.46            &4.08 &4.73 &4.62   \\
PR4&$p=16$ &7.30 &9.96 &10.16  &8.92 &8.87 &8.95            &7.53 &7.23 &7.54   \\
PR4&$p=32$ &4.54 &10.66&11.86  &11.47&11.00&11.32           &8.92 &10.77&11.32  \\ \hline
PR2&$p=4 $ &1.46 &1.87 &1.91   &1.94 &2.30 &2.32            &0.54 &1.35 &2.00   \\
PR2&$p=8 $ &1.44 &2.35 &2.65   &3.08 &3.89 &3.80            &0.41 &1.34 &2.66   \\
PR2&$p=16$ &1.00 &1.83 &2.88   &4.46 &5.00 &5.10            &0.31 &1.06 &2.62   \\
PR2&$p=32$ &0.42 &1.07 &2.03   &5.60 &5.70 &6.00            &0.18 &0.74 &1.86   \\ \hline
BTN&$p=4 $ &2.73 &2.61 &2.53   &2.80 &2.62 &2.59            &2.29 &2.31 &2.22   \\
BTN&$p=8 $ &3.76 &3.71 &3.62   &3.70 &3.71 &3.16            &2.90 &3.15 &2.92   \\
BTN&$p=16$ &4.30 &4.31 &4.28   &4.08 &4.24 &4.19            &3.25 &3.79 &3.58   \\
BTN&$p=32$ &2.43 &3.79 &3.68   &3.59 &3.56 &3.84            &3.12 &3.23 &3.08   \\ \hline
OET&$p=4 $ &2.61 &2.46 &2.40   &2.48 &2.39 &2.33            &1.83 &1.91 &1.85   \\
OET&$p=8 $ &2.90 &2.97 &2.93   &3.05 &3.15 &2.96            &2.21 &2.34 &2.20   \\
OET&$p=16$ &2.36 &2.50 &2.56   &3.12 &3.36 &3.12            &2.36 &2.57 &2.41   \\
OET&$p=32$ &1.56 &2.08 &1.84   &2.13 &2.26 &2.36            &1.52 &1.48 &1.38   \\ \hline
GSD&$p=4 $ &3.12 &2.84 &2.73   &2.93 &2.65 &2.69            &2.59 &2.56 &2.44   \\
GSD&$p=8 $ &5.35 &5.00 &4.45   &4.91 &4.61 &4.00            &4.15 &4.45 &3.77   \\
GSD&$p=16$ &8.31 &8.00 &8.04   &6.88 &6.72 &6.38            &5.60 &7.23 &7.79   \\
GSD&$p=32$ &9.64 &10.25&10.16  &9.26 &5.61 &7.48            &9.26 &10.15&7.25   \\ \hline
GVR&$p=4 $ &2.90 &2.65 &2.92   &3.17 &2.77 &2.87            &2.56 &2.35 &2.70   \\
GVR&$p=8 $ &5.02 &4.82 &4.74   &5.12 &4.84 &4.52            &4.08 &4.47 &4.42   \\
GVR&$p=16$ &7.53 &7.59 &7.52   &7.53 &7.23 &6.22            &5.23 &7.13 &7.17   \\
GVR&$p=32$ &8.31 &9.26 &9.48   &8.60 &8.38 &8.30            &8.92 &9.42 &9.52   \\ \hline
GER&$p=4 $ &3.21 &2.97 &2.87   &3.08 &2.89 &2.82            &2.70 &2.65 &2.58   \\
GER&$p=8 $ &4.82 &4.93 &4.74   &5.02 &4.63 &4.32            &3.88 &4.41 &3.92   \\
GER&$p=16$ &8.31 &7.82 &7.85   &7.30 &6.17 &6.26            &6.34 &7.23 &7.48   \\
GER&$p=32$ &9.64 &10.15&10.37  &10.04&5.67 &7.16            &9.26 &10.15&9.97   \\ \hline
\end{tabular}}
\end{center}

\newpage
\vspace*{13pt}
\centerline{\footnotesize Table~4. Time(sec) for SR4; Speedup for other on CONFIG4}
\noindent
\begin{center}
{\footnotesize
\begin{tabular}{|l r|r|r|r||r|r|r||r|r|r|}\hline
\multicolumn{11}{|c|}{Speedup on Intel Platform}  \\ \hline
\multicolumn{2}{|c|}{}                  & \multicolumn{3}{c||}{OpenMPI}   &
\multicolumn{3}{c||}{MulticoreBSP}       & \multicolumn{3}{c|}{BSPlib} \\ \hline
   &       &  8M & 32M & 128M  &  8M & 32M & 128M           &  8M & 32M & 128M \\ \hline
SR4&$p=1$  &0.054&0.224&0.840  &0.054&0.224&0.840           &0.054&0.224&0.840  \\ \hline
PR4&$p=2 $ &1.35 &1.34 &1.30   &1.01 &1.01 &1.03            &1.28 &1.28 &1.23   \\
PR4&$p=4 $ &1.38 &1.86 &1.33   &1.22 &1.20 &1.15            &1.63 &1.70 &1.54   \\
PR4&$p=8 $ &1.68 &1.83 &1.67   &1.20 &1.22 &1.10            &1.63 &1.72 &1.56   \\ \hline
PR2&$p=2 $ &0.69 &0.80 &0.78   &0.80 &0.82 &0.82            &0.37 &0.67 &0.79   \\
PR2&$p=4 $ &0.53 &0.79 &0.67   &0.80 &0.85 &0.81            &0.30 &0.64 &0.82   \\
PR2&$p=8 $ &0.34 &0.50 &0.50   &0.56 &0.62 &0.59            &0.18 &0.43 &0.57   \\ \hline
BTN&$p=2 $ &1.08 &1.10 &1.03   &1.03 &1.04 &0.97            &1.08 &1.044&0.95   \\
BTN&$p=4 $ &0.85 &1.05 &0.79   &0.93 &0.93 &0.95            &0.98 &1.00 &0.91   \\
BTN&$p=8 $ &0.70 &0.72 &0.66   &0.60 &0.62 &0.56            &0.68 &0.70 &0.63   \\ \hline
OET&$p=2 $ &1.10 &1.12 &1.05   &1.08 &1.04 &0.98            &1.14 &1.10 &1.03   \\
OET&$p=4 $ &0.83 &1.05 &0.80   &0.96 &0.95 &0.87            &1.01 &1.03 &0.93   \\
OET&$p=8 $ &0.65 &0.70 &0.63   &0.59 &0.62 &0.56            &0.65 &0.67 &0.61   \\ \hline
GSD&$p=2 $ &1.14 &1.15 &1.08   &1.28 &1.24 &1.12            &1.35 &1.29 &1.22   \\
GSD&$p=4 $ &1.12 &1.35 &0.97   &1.54 &1.40 &1.33            &1.58 &1.58 &1.42   \\
GSD&$p=8 $ &1.31 &1.33 &1.16   &1.28 &1.37 &1.17            &1.42 &1.43 &1.22   \\ \hline
GVR&$p=2 $ &0.96 &0.97 &0.92   &1.14 &0.99 &0.97            &1.22 &1.15 &1.10   \\
GVR&$p=4 $ &0.91 &1.19 &0.86   &1.58 &1.53 &1.35            &1.54 &1.58 &1.40   \\
GVR&$p=8 $ &1.14 &1.19 &1.06   &1.42 &1.48 &1.27            &1.50 &1.50 &1.32   \\ \hline
GER&$p=2 $ &1.20 &1.28 &1.12   &1.31 &1.15 &1.14            &1.38 &1.31 &1.23   \\
GER&$p=4 $ &1.08 &1.43 &1.04   &1.54 &1.56 &1.35            &1.63 &1.63 &1.42   \\
GER&$p=8 $ &1.31 &1.36 &1.21   &1.28 &1.38 &1.23            &1.38 &1.42 &1.28   \\ \hline
\end{tabular}}
\end{center}

\newpage
\vspace*{13pt}
\centerline{\footnotesize Table~5. Small problem size experiments on CONFIG3}
\noindent
\begin{center}
{\footnotesize
\begin{tabular}{|l r|r|r|r||r|r|r||r|r|r|}\hline
\multicolumn{11}{|c|}{Time in microseconds}  \\ \hline
\multicolumn{2}{|c|}{}                  & \multicolumn{3}{c||}{OpenMPI}   &
\multicolumn{3}{c||}{MulticoreBSP}       & \multicolumn{3}{c|}{BSPlib} \\ \hline
   &       &8192 &32768&131072           &8192 &32768&131072           &8192 &32768&131072\\ \hline
SR4&$p=1$  & 109 & 483 & 2076            & 109 & 483 & 2076            & 109 & 483 &2076   \\ \hline
PR4&$p=4 $ & 678 & 855 & 1400            & 282 & 397 &  886            & 829 & 209 &2571   \\
PR4&$p=16$ &1404 &1218 & 1933            & 633 & 655 &  987            & 417 & 482 &4356   \\
PR4&$p=32$ &3338 &8225 & 6818            &1259 &1307 & 1512            & 516 &6807 &6067   \\  \hline
BTN&$p=4 $ & 102 & 268 &  896            &  72 & 274 & 1105            &  86 & 298 & 944   \\
BTN&$p=16$ & 254 & 362 & 1062            & 151 & 236 &  699            & 176 & 304 & 731   \\
BTN&$p=32$ & 972 &1990 & 2608            & 290 & 382 &  816            & 401 & 426 & 904   \\ \hline
GSD&$p=4 $ & 187 & 338 &  855            &  93 & 249 &  908            & 117 & 261 & 744   \\
GSD&$p=16$ & 406 & 408 &  909            & 181 & 237 &  484            & 416 & 495 & 645   \\
GSD&$p=32$ &2030 &2347 & 2589            & 348 & 374 &  619            &1120 &1029 &1184   \\ \hline
GVR&$p=4 $ & 237 & 424 & 1055            & 207 & 425 & 1248            & 135 & 339 & 938   \\
GVR&$p=16$ & 541 & 534 & 1075            & 496 & 665 & 1044            & 373 & 463 & 672   \\
GVR&$p=32$ &1248 &2376 & 2620            & 988 &1252 & 1762            & 987 & 909 &1071   \\ \hline
GER&$p=4 $ & 211 & 373 &  862            & 138 & 293 &  965            & 116 & 286 & 768   \\
GER&$p=16$ & 492 & 498 & 1006            & 464 & 594 &  890            & 411 & 531 & 654   \\
GER&$p=32$ &1207 &2455 & 2586            &1017 &1216 & 1646            &1080 &1030 &1172   \\ \hline
\end{tabular}}
\end{center}


\begin{thebibliography}{0}

\bibitem{Batcher68}
K.~Batcher.
\newblock Sorting Networks and their applications.
\newblock  In {\em
Proceedings of the AFIPS Spring Joint Computing Conference,} pp. 307-314, 1968.

\bibitem{BS78} G.~Baudet and D.~Stevenson.  
\newblock Optimal sorting algorithms for parallel computers.  
\newblock IEEE Transactions on Computers, C-27(1):84-87, 1978.

\bibitem{BLM91}
G.~E.~Blelloch, C.~E.~Leiserson, B.~M.~Maggs, C.~G.~Plaxton,
S.~J.~Smith, and M.~Zagha.
\newblock A comparison of sorting algorithms for the connection machine CM-2.
\newblock In Proc. of the Symposium on Parallel Algorithms and Architectures (SPAA '91), pp. 3-16,
Hilton Head, SC, USA, July 21-24, ACM Press, 1991.

\bibitem{Bramas17}
B.~Bramas.
\newblock Fast Sorting Algorithms using AVX-512 on Intel Knights Landing.
\newblock  arXiv:1704.08579, 24 Apr 2017.

\bibitem{Cederman08}
D.~Cederman and P.~Tsingas.
\newblock On sorting and load balancing on GPUs.
\newblock In Proc. ACM SIGARCH Computer Architecture News, Vol. 36(5), Dec. 2008, pp 11-18,
ACM NY, USA.

\bibitem{Cheng11}
Z.~Cheng, K.~Qi, L.~Jun, and H.~Yi-Ran.
\newblock Thread-Level Parallel Algorithm for Sorting Integer Sequence on Multi-core Computers.
\newblock In Proc.  International Symposium on  Parallel Architectures, Algorithms and Programming, 
Tianjin, China, Dec. 9-11, 2011, pp. 37-41, IEEE Press.
DOI Bookmark: http://doi.ieeecomputersociety.org/10.1109/PAAP.2011.57

\bibitem{CLRS}
T.~H.Cormen, C.~E.~Leiserson, R.~L.~Rivest, and C.~Stein.
\newblock Introduction to Algorithms, Third Edition.
\newblock The MIT Press, 2009.

\bibitem{Dehne99}
A.~Chan and F.~Dehne.and H.~Zaboli.
\newblock A note on coarse grained parallel integer sorting.
\newblock Parallel Processing Letters, Vol.  9, No. 4, pp. 533-538, 1999.

\bibitem{Dehne12}
F.~Dehne and H.~Zaboli.
\newblock Deterministic Sample sort for GPUs.
\newblock Parallel Processing Letters, Vol. 22, No. 3, pp. 1250008-1 to 1250008-14, 2012.

\bibitem{Dehne17}
F.~Dehne and H.~Zaboli.
\newblock Parallel Sorting for GPUs. 
\newblock In: Adamatzky A. (eds) Emergent Computation. Emergence, Complexity and 
Computation, vol 24., pp 293-302, Springer, 2017.

\bibitem{Frazer70}
W.~D. Frazer and A.~C. McKellar.
\newblock Samplesort: A sampling approach to minimal storage tree sorting.
\newblock {\em Journal of the ACM}, 17(3):496-507, 1970.


\bibitem{Garber2008}
Brian A. Garber, Dan Hoeflinger, Xiaoming Li, Maria Jesus Garzaran, and David Padua.
\newblock Automatic Generation of a Parallel Sorting Algorithm.
\newblock IEEE International Symposium on Parallel and Distributed Processing, 2008,
14-18 April 2008, p 1-5.


\bibitem{GS96}
A.~V.~Gerbessiotis and C.~J.~Siniolakis.
\newblock Deterministic sorting and randomized median finding on the BSP model.
\newblock In {\em Proceedings of the 8-th Annual ACM Symposium on Parallel
Algorithms and Architectures}, pp. 223-232, Padua, Italy, June 1996.


\bibitem{GS97a}
A.~V.~Gerbessiotis and C.~J.~Siniolakis.
\newblock  An Experimental Study of BSP Sorting Algorithms.
\newblock In Proceedings of 6th Euromicro Workshop on Parallel and
Distributed Processing, Madrid, Spain, January, IEEE Computer Society
Press, 1998.

\bibitem{GS99a}
A.~V.~Gerbessiotis and C.~J.~Siniolakis.
\newblock Efficient deterministic sorting on the BSP model.
\newblock Parallel Processing Letters, Vol 9 No 1 (1999), pp 69-79,
World Scientific Publishing Company.


\bibitem{GV94}
A.~V.~Gerbessiotis and L.~G.~Valiant.
\newblock Direct bulk-synchronous algorithms.
\newblock {\em    Journal of Parallel and Distributed Computing},
22:251-267, Academic Press, 1994.

\bibitem{G15}
A.~V.~Gerbessiotis.
``Extending the BSP model for multi-core and out-of-core computing: MBSP'', 
Parallel computing 41 (2015) 90-102.

\bibitem{AVG18}
A.~V.~Gerbessiotis.
\newblock  {\tt
 http://\linebreak[0]www.\linebreak[0]cs.njit.edu/$\;\tilde{ }\;$alexg/cluster/software.html}.
July 2018.

\bibitem{Hill96}
D.~B.~Skillicorn, J.~M.~D.~Hill, and W.~F.~McColl.
\newblock Questions and answers about BSP.
\newblock Scientific Programming,  6 (1997),  pp. 249-274.


\bibitem{Hoare62}
C.~A.~R. Hoare.
\newblock Quicksort.
\newblock {\em The Computer Journal}, 5:10-15, 1962.


\bibitem{HC}
J.~S. Huang and Y.~C. Chow.
\newblock Parallel sorting and data partitioning by sampling.
\newblock {\em IEEE Computer Society's Seventh International Computer Software
  and Applications Conference}, pages 627--631, November 1983.


\bibitem{Inoue11}
H.~Inoue, T.~Moriyama, H.~Komatsu and T.~Nakatani.
\newblock A high-performance sorting algorithm for multicore single-instruction multiple-data processors. 
\newblock Softw. Pract. Exper., 42: 753–777. doi:10.1002/spe.1102, Wiley, 2012.


\bibitem{Ionescu97}
M.~F.~Ionescu and K.~E.~Schauser.
\newblock Optimizing parallel bitonic sort.
\newblock In Proc. IEEE Parallel Processing Symposium, Geneva, Switzerland,
IEEE, pp 303-309, 1997.

\bibitem{Knuth73}
Knuth D.~E.
\newblock  {\em The Art of Computer Programming.  Volume III: Sorting
and Searching.}
\newblock Addison-Wesley, Reading, 1973.

\bibitem{Langr16}
D.~Langr, P.~Tvrdik, and I.~Simecek.
\newblock AQsort: Scalable Multi-Array In-Place Sorting with OpenMP.
\newblock Scalable Computing: Practice and Experience, Vol 17(4), pp 369-391,SCPE, 2016.

\bibitem{Leighton91}
F.~T.~Leighton.
\newblock {\em Introduction to Parallel Algorithms and Architectures:
Arrays - Trees - Hypercubes.} Morgan Kaufmann, California, 1991.

\bibitem{NS}
D.~Nassimi and S.~Sahni.
\newblock Parallel permutation and sorting algorithms and a new generalized
  connection network.
\newblock {\em Journal of the ACM}, 29:3:642:667, July 1982.


\bibitem{Maus11}
A.~Maus.
\newblock A full parallel radix sorting algorithm for multicore processors.
\newblock In Proc. Norsk Informatikkonferanse (NIK 2011), pp. 37-48, 2011. 

\bibitem{openMPI}
R.L.~Graham, T.S.~Woodall, J.M.~Squyres.
\newblock Open MPI: A Flexible High Performance MPI.
\newblock In Proceedings, 6th Annual International Conference on Parallel Processing and Applied 
Mathematics, September 2005, Poznan, Poland, Springer Verlag Lecture Series in Computer Science, 
pp. 228-239, Lecture Notes in Computer Science,  Vol. 3911, Springer.


\bibitem{Peters11}
H.~Peters, O.~Schulz-Hildebrandt and N.~Luttenberger. 
\newblock Fast in-place, comparison-based sorting with CUDA: a study with bitonic sort. 
\newblock Concurrency Computat.: Pract. Exper., 23: 681–693, 2011. doi:10.1002/cpe.1686

\bibitem{Rathi16}
S.~Rathi.
\newblock Optimizing sorting algorithms using ubiquitous multi-core massively
parallel GPGPU processors.
\newblock In Proceedings 7th Int. Conference on Communication, Computing,
and Vizualization, 2016, Procedia Computer Science 79, pp. 231-237, 2016.


\bibitem{Reif87}
H.~J.~Reif and L.~G.~Valiant.
\newblock A logarithmic time sort for linear size networks.
\newblock {\em Journal of the ACM,} 34:60-76, January 1987.

\bibitem{Reischuk85}
R.~Reischuk.
\newblock Probabilistic parallel algorithms for sorting and selection.
\newblock {\em SIAM Journal on Computing,} 14(2):396-409, 1985.

\bibitem{SH}
H.~Shi and J.~Schaeffer.
\newblock Parallel sorting by regular sampling.
\newblock {\em Journal of Parallel and
Distributed Computing}, 14:362-372, 1992.

\bibitem{LGV90}
L.~G.~Valiant.
\newblock A bridging model for parallel computation.
\newblock {\em Comm. of the ACM,} 33(8):103-111, August 1990.


\bibitem{Yzelman14}
A.~N.~Yzelman, R.~H.~Bisseling,  D.~Roose, and K.~Meerbergen.
\newblock MulticoreBSP for C: a high-performance  library for shared-memroy
parallel programming.
\newblock Technical report TW 624,  KU Leuven, 2013. Also International Journal
of Parallel Programming, Vol. 42(4), pp 619-642,  August 2014, Springer.

\bibitem{LGV11}
L.~G.~Valiant.
\newblock A bridging model for multi-core computing.
\newblock {\em Journal of Computer and System Sciences,} 
77(1):154-166, 2011.


\bibitem{Zaghloul17}
S.~S.~Zaghloul, L.~M.~AlShehri, M.~F.~AlJouie, N.~E.~AlEissa, N.~A.~AlMogheerah.
\newblock Analytical and experimental performance evaluation of parallel merge     
sort on multi-core systems.
\newblock International Journal of Engineering and Computer Science,
Vol 6 (6), pp. 21764-21773, June 2017.

\bibitem{Zhong12}
C.~Zhong, Z.~Qu, F.~Yang, M.~Yin, X.~Li.
\newblock Efficient and scalable parallel algorithm for sorting multisets on
multi-core systems.
\newblock Journal of Computers, Vol 7(1), pp. 30-41, IAP, January 2012.

\end{thebibliography}
\end{document}